\newtheorem{theorem}{Theorem}
\newtheorem{lemma}{Lemma}
\begin{document}
\title{Knotted DNA Configurations in Bacteriophage Capsids: A Liquid Crystal Theory Approach\thanks{P.L. was partially supported by NSF grant DMS-2318053. Z.W., T.C. and J.A. were partially supported by NSF grant DMS-2318052 and DMS-1817156. MC.C. was partially supported by NSF grant DMS-2318051. M.V. was partially supported by NSF grant DMS-2054347 and DMS-1817156.}}

\author{Pei Liu\thanks{Department of Mathematics and Systems Engineering, Florida Institute of Technology, Melbourne, FL, 32901.}
\and Zhijie Wang\thanks{Graduate Program in Applied Mathematics, University of California, Davis, CA 95616.}
\and Tamara Christiani\thanks{Integrative Genetics and Genomics Graduate Group, University of California, Davis, CA 95616.}
\and Mariel Vazquez
  \thanks{
Department of Mathematics and Department of Microbiology and Molecular Genetics, University of California, Davis, CA 95616.
}
\and M. Carme Calderer
 \thanks{
 School of Mathematics, University of Minnesota, Minneapolis, MN 55442 (calde014@umn.edu)
}
\and Javier Arsuaga
\thanks{Department of Molecular and Cellular Biology and Department of Mathematics, University of California, Davis, CA 95616 (jarsuaga@ucdavis.edu)
}}

\date{}
\maketitle

\begin{abstract}
Bacteriophages, viruses that infect bacteria, store their micron long DNA inside an icosahedral capsid with a typical diameter of  40 nm to 100 nm. Consistent with experimental observations, such confinement conditions induce an arrangement of DNA that corresponds to a hexagonal chromonic liquid-crystalline phase, and increase the topological complexity of the genome in the form of knots. A mathematical model that implements a chromonic liquid-crystalline phase and that captures the changes in topology has been lacking.  We adopt a mathematical model that represents  the viral DNA as a pair  of a vector field and a line. The vector field is a minimizer of the total Oseen–Frank energy for nematic liquid crystals under chromonic constraints, while the line is identified with the tangent to the field at selected locations, representing the central axis of the DNA molecule. The fact that the Oseen–Frank functional assigns infinite energy to topological defects (point defects in two dimensions and line defects in three dimensions) precludes the presence of singularities and, in particular, of knot structures.  To address this issue, we begin with the optimal vector field and helical line, and propose a new algorithm to introduce knots through stochastic perturbations associated with splay and twist deformations, modeled by means of a Langevin system. We conclude by comparing knot distributions generated by the model and by interpreting them in the context of previously published experimental results. Altogether, this work relies on the synergy of modeling, analysis and computation in the study of viral DNA organization in capsids. 
\end{abstract}

\section{Introduction}

Knots are ubiquitous in macromolecules and have been widely observed in both DNA and proteins (see \cite{tubiana2024topology} for a recent review). DNA knots may arise through several mechanisms, including site-specific recombination reactions \cite{colloms1997topological,crisona1999topological}, the action of topoisomerases \cite{lopez2012topo,valdes2019transcriptional,Wasserman1991}, and random cyclization experiments in either free solution \cite{rybenkov1993probability,shaw1993knotting} or confined environments \cite{Arsuaga2002b,liu1981knotted,Liu1981}. Mathematical analyses of these data have helped elucidate the role of enzymes \cite{moore2020recent,shimokawa2013ftsk,valencia2011predicting} and deepened our understanding of the DNA duplex and of the statistical behavior of DNA knots \cite{rybenkov1993probability,tesi1994knotting,beaton2018characterising,beaton2024first,micheletti2011polymers,Micheletti2008}.

In the bacterial virus, bacteriophage P4, one single DNA molecule is packed in an icosahedral protein structure called capsid. It has been experimentally observed that most DNA molecules extracted from P4 capsids are knotted \cite{Arsuaga2002b,liu1981knotted,Wolfson1985}. Further analyses of the P4 knots have revealed that knotting is mostly driven by confinement, and modulated by the liquid crystal structure of DNA and the packing reaction \cite{arsuaga2002knotting, Arsuaga2005,arsuaga2008,marenduzzo2009dna}. The study of P4 knots has often relied on mathematical models of random knotting \cite{Arsuaga2005,arsuaga2007sampling,diao1994random,even2017models,ishihara2017bounds,liu2020characterizing,ziegler2012computational}.  These models however fail to implement a reliable representation of  the DNA molecule. 

In P4 and other bacteriophages, the DNA molecule is densely packed within the capsid whose radius is at least one order of magnitude smaller than the length of packed genome. The resulting DNA concentration can reach up to 800 mg/ml \cite{Kellenberger1986}, a regime in which in vitro DNA has been shown to form a hexagonal chromonic liquid crystalline phase \cite{livolant1989highly,Rill1986,Strzelecka1988,leforestier2009structure,Lepault1987,livolant1984cholesteric,pelta1996dna}. Although DNA organization varies depending on concentration, capsid shape \cite{petrov2007conformation}, ionic conditions \cite{leforestier2009structure}, and the packaging mechanism \cite{cruz2020quantitative}, cryo-EM imaging reveals well-ordered coaxial layers of DNA near the capsid wall, consistent with the liquid crystal phase of DNA which transition into a disordered configuration near the center of the capsid \cite{Comolli2008,Lander2006}. 

Let us now introduce some basic concepts and liquid-crystal terminology. Small-molecule liquid crystals are  divided into two classes: calamitic and discotic. Calamitic mesogens are rigid, rod-like molecules whose long axes tend to align along a preferred direction, or optic axis.  In discotic liquid crystals, disk-like molecules stack face-to-face into columns, which also tend to follow a preferential axial alignment. The nematic phase is characterized by orientational order—either of the rod-like molecules in calamitics or of the columnar stacks in discotics. In the discotic case, under lower temperatures (thermotropic liquid crystals) or at higher concentrations (lyotropic systems), these columns may further organize into hexagonal arrays, forming the columnar hexagonal phase. Chromonic liquid crystals, also referred to as the liquid crystals of life, represent a special subclass of lyotropic columnar systems in water, with the hexagonal phase corresponding to their highest degree of order. 
These concepts also extend to semiflexible polymers, and in particular to DNA, whose intrinsic stiffness and anisotropy enable it to exhibit analogous liquid-crystalline behavior under appropriate conditions.

To the best of our knowledge, the first observation of condensed DNA forming liquid crystal phases can be traced to the (1978) article by Livolant and Bouligand \cite{LivolantBouligand1978}. In such an article, the authors used optical microscopy to study the twisted, cholesteric organization of in-vivo dinoflagellate chromosomes in connection with DNA self-assembly. In 1984, Livolant explicitly demonstrated that in-vitro solutions of 
concentrated DNA also form a cholesteric liquid-crystal phase \cite{Livolant1984a}. The landmark observation,  based on X-ray diffraction studies, of   highly condensed DNA  forming a hexagonal chromonic phase   was later  reported by Livolant et al. in 1989 \cite{livolant1989highly}.  Later studies by Leforestier and Livolant further confirmed the earlier X-ray finding by cryogenic electron microscopy \cite{LeforestierLivolant1991, Leforestier1993}. Important advances in liquid crystal science also occurred in the late part of the 
 20th century, including the discovery of pharmaceutic drug compounds,  such as the antiashtmatic disodium cromoglycate (DSCG), which together with water-soluble  food dyes such as Sunset Yellow, were  identified as hexagonal chromonic liquid crystals. 
   However, the chromonic denomination of  such liquid crystals, attributed to John Lyndon in 2004, was based on their analogy with configurations of DNA condensates, but  it was not until  later that found its application in designating the DNA phases \cite{Lydon2004}.

The Oseen-Frank energy of a liquid crystal provides the classical framework for describing orientational order in the nematic phase. The theory models the distortion energy of a unit vector field $\vec{n}$, known as the director field, which represents the local tangent direction of DNA filaments. The energy penalizes deviations from a uniform alignment, with separate contributions from splay, twist, and bend distortions. The equilibrium configuration corresponds to a director field that minimizes the total distortion energy, subject to boundary and the unit length constraint $|\vec{n}|=1:$
\begin{eqnarray}
    2E_{OF}&=& \int_{\Omega} \left[ k_1 (\nabla \cdot \vec{n})^2 + k_2 (\vec{n} \cdot \nabla \times \vec{n})^2 + k_3 |\vec{n} \times \nabla \times \vec{n}|^2 \right] d\vec{x}\nonumber \\
    &+& \int_{\partial \Omega} (k_2+k_4)\big((\nabla\vec{n})\vec{n}-(\nabla\cdot\vec{n})\vec{n}\big) \cdot d\vec{S}.\label{S0}
\end{eqnarray}
In this expression,  $\Omega$ represents the domain occupied by the liquid crystal and $k_1, k_2, k_3$ are the splay, twist, bending Frank constants. The term with the coefficient     `$k_2+k_4$' is the saddle-splay term, a null-Lagrangian that accounts for  surface energy. Eq. \eqref{S0} has a unique minimizer provided that $k_1, k_2, k_3>0$, $k_2 \ge |k_4|$ and $2k_1 \ge k_2+k_4$ \cite{hardt1986existence}. 
 The Oseen-Frank energy has also been adopted to modeling the hexagonal chromonic phase by imposing a zero-divergence constraint, which favors energy minimizes with layered structures, while preventing strand crossings. In our setting, the capsid plays the confining role otherwise assigned to an additional surface energy,    in the case of free shapes \cite{koizumi2022toroidal, hiltner2021chromonic}.    
Motivated by these observations, we have developed an application of the theory to study the spooling of DNA in confined geometries \cite{walker2020liquid,walker2020fine,hiltner2021chromonic,liu2021ion,liu2022helical,ortiz2003}. 
 However, its energy functional penalizes defects with infinite cost, and therefore cannot capture knotted DNA configurations through deterministic minimization alone.

In this work, we build upon our previous studies \cite{liu2022helical,liu2025submitted} to find the helical energy minimizing vector field and the corresponding  unknotted tangent line. 
To further incorporate knotted structures, we introduce the effect of thermal fluctuations through  stochastic perturbations, via Langevin dynamics. This allows us to simulate ensembles of DNA configurations consistent with the underlying liquid crystal ordering while capturing topological complexity.  Another main departure from the previous work is the incorporation of a variable mass density of the DNA. This accounts for the fact that  DNA inside the capsid is neither uniformly distributed nor incompressible. Indeed, confinement leads to local variations in packing density. Consistent with  the incorporation of the  density is the addition of an entropic energy term and a chemical potential. At the molecular scale, DNA segments are semiflexible polymers with thermal fluctuations, with a residual configurational entropy associated with local alignment or disorder and density fluctuations; it accounts for the entropic cost of confinement.  The chemical potential controls the energetic cost of adding or removing DNA mass into the capsid.  

This article is structured as follows. In section 2, we  present the liquid crystal model of DNA organization.  This model extends our previous studies  by accounting for the variable DNA mass density, the associated entropic energy  and a DNA chemical potential. We also include the  saddle-splay term in the Oseen-Frank energy. 
We show that the resulting nonlinear, second order ordinary differential Euler--Lagrange equation for the helical angle $\psi$ has a unique minimizer. 
In section 3, we introduce the algorithms that generate random walks governed by Langevin equations and that produce knotting distributions on the minimizer structure.  In section 4, we numerically solve the Euler-Lagrange equations and analyze the structure of the minimizer according to available biological data and parameter selection. We generate random distributions of knotted configurations and compare them with published experimental knot distributions \cite{Arsuaga2005}. We conclude by discussing the implications of our study and possible extensions of this work. 

\section{A Continuum Model}\label{continuum_model} In this section, we consider the average configuration of the ordered DNA according to the liquid crystal theory, ignoring the disordered region at the center as well as the north and south caps. The DNA is modeled as following helical  trajectories on cylindrical surfaces bounded by a fixed inner $R_1$ and an outer radius $R_2$,  with $R_2>R_1>0$.  The local configuration  is described by a unit vector field $\vec{n}$, representing the tangent direction of the DNA filament, together with a scalar field $\rho$, which specifies  the local packing density. The ordered domain, denoted by $\Omega$, is taken to be a cylindrical shell of height $H$, with inner radius $R_1$ and outer radius $R_2$. In cylindrical coordinates,
 $$\Omega=\{(r, \theta, z): R_1\leq r\leq R_2,  0\leq \theta\leq 2\pi,  0\leq z\leq H\}.$$  

The Euler--Lagrange equations that describe the equilibrium configuration of the DNA within the ordered phase, are derived by minimizing the following energy,
\begin{eqnarray}
    E_{total} &=& \frac{1}{2}\int_\Omega \rho(\vec{x})\left[ K_1 (\nabla \cdot \vec{n})^2 + K_2 (\vec{n} \cdot \nabla \times \vec{n})^2 + K_3 |\vec{n} \times \nabla \times \vec{n}|^2 \right] d\vec{x}\nonumber \\
    &+& \frac{1}{2}\int_{\partial \Omega} \rho(\vec{x})(K_2+K_4)\big((\nabla\vec{n})\vec{n}-(\nabla\cdot\vec{n})\vec{n}\big) \cdot d\vec{S} + k_B T\int_\Omega \rho (\log \rho + \mu) d\vec{x}\label{S1}
\end{eqnarray}
Here $\rho K_1, \rho K_2, \rho K_3$ are the splay, twist, bending Frank coefficients, respectively. We assume that  the Frank coefficients scale  proportionally with the local density $\rho$, where $K_1$, $K_2$ and $K_3$ are material constants. The surface integral corresponds to the saddle-splay term in the Oseen--Frank theory, accounting for boundary contributions to the energy. The last term represents  the entropic contribution, with $k_B$ denoting the Boltzmann constant, $T$ being the temperature and $\mu$ the chemical potential of the DNA molecule. In the case that the total mass of ordered DNA is fixed, the constraint
\begin{equation}
\int_\Omega \rho(\vec{x}) d\vec{x} = N \label {mass-total}\end{equation}  holds and  the chemical potential $\mu$ corresponds to the Lagrange multiplier.

In order to represent the helical structure of the DNA, we introduce the unit vector $\vec{n}$ at the point $(r, \theta, z)$ to the helical curve,
\begin{equation}
    \vec{n}(r,\theta,z) =  \cos\psi \vec{e}_\theta+ \sin\psi\vec{e}_z, \quad \psi=\psi(r, \theta, z),\label{helical}
\end{equation}
with $\psi \in (0,\pi/2)$  representing the local helical angle. Furthermore, we look for minimizers that satisfy the constraint,
\begin{equation}
    \nabla \cdot \vec{n} = -\frac{\sin\psi}{r} \psi_\theta + \cos \psi \psi_z=0 \Longrightarrow \psi_z = \frac{\tan \psi}{r} \psi_\theta. \label{constraint}
\end{equation}
The zero-splay constraint in a liquid crystal prevents the presence of dislocations, implying that the same number of filaments that enter a unit area crossection also exit it.  In the case of the hexagonal columnar phase, nonzero splay would allow for deviations from the lattice structure. 
Heuristically, the constraint \eqref{constraint} is achieved at the limit $K_1\to \infty$ also represented as $K_1 \gg K_2, K_3$.
In the later section where we reconstruct the knotted DNA trajectory, we will relax the constraint by including the splay term corresponding to the elasticity constant.

We recall that the pitch of a helix is the height along the helical axis between two points of the curve separated by a complete $2\pi$ rotation angle. That is,
\begin{equation}
    p(r,\theta,z) = 2 \pi r \tan \psi,  \,\, \,  \psi\in (0, \frac{\pi}{2}).
\end{equation}
As the DNA filament spools and fills the capsid, its cross sections form a hexagonal structure \cite{hud2001cryoelectron,kindt2001,ortiz2003}. We assume that the distance between DNA  layersis of the same order of magnitude as the helical pitch, thus the local DNA density satisfies
\begin{equation}
    \frac{\sqrt{3}}{2} p^2 \times \frac{\rho}{\eta}=1 \Longrightarrow \quad  \rho= \frac{2}{\sqrt 3} \frac{\eta}{p^2} \triangleq \frac{C}{r^2 \tan^2 \psi}
\end{equation}
Here $\frac{\sqrt{3}}{2} p^2$ represents the area of a hexagonal unit cell of diameter $p$. The positive factor  $\eta \approx 3{\text{nm}}^{-1} $ reflects  the fact that each base pair of DNA corresponds to a length of $0.34 \text{nm}$ along the DNA axis \cite{watson1953molecular}. The parameter $ C = \frac{\eta}{2\pi^2 \sqrt{3}}$ has the dimension of base pairs per unit length.

With these assumptions, the total energy of the system can be simplified to,
\begin{eqnarray}
 && E_{total}[\psi(r,\theta,z)]  = \int_\Omega \left[ \frac{K_2}{2}( \frac{\sin(2\psi)}{2r} -\psi_r )^2 + \frac{K_3}{2}\left( (\frac{\cos^2 \psi}{r})^2 + \frac{\psi^2_\theta}{r^2 \cos^2 \psi } \right)\right. \nonumber\\
 && \left. + k_B T\left(\log \frac{C}{r^2 \tan^2 \psi} + \mu\right) \right] \frac{C}{r^2 \tan^2 \psi} d\vec{x}  - \int_{\partial \Omega} \frac{K_2+K_4}{2} \frac{C \cos^2 \psi}{r^3 \tan^2 \psi}\vec{e}_r \cdot d\vec{S} \\
 &\geq & \int_\Omega \left[ k_B T\left(\log \frac{C}{r^2 \tan^2 \psi} + \mu\right)+\frac{K_2}{2}( \frac{\sin(2\psi)}{2r} -\psi_r )^2 \right. \nonumber\\
 && \left.+ \frac{K_3}{2} (\frac{\cos^2 \psi}{r})^2 \right] \frac{C}{r^2 \tan^2 \psi} d\vec{x} - \int_{\partial \Omega} \frac{K_2+K_4}{2} \frac{C \cos^2 \psi}{r^3 \tan^2 \psi}\vec{e}_r \cdot d\vec{S} \\
 &=& \int_{0}^H \int_0^{2\pi} \left(\int_{R_1}^{R_2} \left[ k_B T\left(\log \frac{C}{r^2 \tan^2 \psi} + \mu\right) + \frac{K_2}{2}( \frac{\sin(2\psi)}{2r} -\psi_r )^2 \right. \right.\nonumber\\&& \left. \left. + \frac{K_3}{2} (\frac{\cos^2 \psi}{r})^2  \right] \frac{C}{r \tan^2 \psi} dr -  \frac{K_2+K_4}{2} \left. \frac{C \cos^2 \psi}{r^2 \tan^2 \psi}\right|_{R_1}^{R_2} \right) d\theta dz \\&& \triangleq E[\psi(r,\theta,z)]. \label{OF_energy}
\end{eqnarray}
The minimizer of the energy $E[\psi(r,\theta,z)]$ is expected to be  $\psi(r)$ which is independent of $(\theta,z)$. Furthermore, since $\tan \psi$ appears in the denominator, the concentric configuration $\psi(r) \equiv 0$ would cause the energy $E[\psi(r)]$ to diverge. Therefore, the concentric configuration cannot  be a minimizer. We can further simplify the energy to derive the Euler--Lagrange equation,

\begin{eqnarray}
   \frac{E[\psi(r)]}{\pi HC} &=&  \int_{R_1}^{R_2} \left[ 2k_B T\left(\log \frac{C}{r^2 \tan^2 \psi} + \mu\right) +K_2( \frac{\sin(2\psi)}{2r} -\psi_r)^2 \right. \nonumber\\ && \left. + K_3 (\frac{\cos^2 \psi}{r})^2  \right] \frac{1}{r \tan^2 \psi} dr  - (K_2+K_4) \left. \frac{ \cos^2 \psi}{r^2 \tan^2 \psi}\right|_{R_1}^{R_2} \\
    &=&   \int_{R_1}^{R_2} \left[K_2(\frac{\sin 2\psi}{2r})^2 + K_2 \psi_r^2 + K_3 \frac{\cos^4 \psi}{r^2}  \right. \nonumber\\
    && \left. + 2k_B T\left(\log \frac{C}{r^2 \tan^2 \psi} + \mu\right) \right] \frac{1}{r\tan^2 \psi} dr  \nonumber\\
    && -  \int_{R_1}^{R_2} K_2  \frac{\sin(2\psi)}{r}  \frac{\psi_r}{r \tan^2 \psi} dr - (K_2+K_4) \left. \frac{ \cos^2 \psi}{r^2 \tan^2 \psi}\right|_{R_1}^{R_2}\nonumber\\
    &\triangleq& I_1 + I_2 + I_{sp}
\end{eqnarray}
We first calculate the second integral $I_2$:
\begin{eqnarray}
    I_2 &=& -  \int_{R_1}^{R_2} K_2  \frac{\sin(2\psi)}{r}  \frac{\psi_r}{r \tan^2 \psi} dr\nonumber \\
    &=& -  K_2 \left.\frac{2\ln \sin \psi - \sin^2 \psi }{r^2}\right|_{R_1}^{R_2} -2  K_2\int_{R_1}^{R_2}\frac{2\ln \sin \psi - \sin^2 \psi }{r^3} dr \nonumber\\
    &\triangleq& K_2 \left.\frac{2u + e^{-2u}}{r^2}\right|_{R_1}^{R_2} +2  K_2\int_{R_1}^{R_2}\frac{2u + e^{-2u}}{r^3} dr,
\end{eqnarray}
where we used the change of variables $u = -\ln \sin \psi$. Likewise,
\begin{eqnarray}
    I_1 &=&\int_{R_1}^{R_2} \left[K_2\left( \frac{(1- e^{-2u})^2}{r^3}+  \frac{u_r^2}{r}\right) + K_3 \frac{(1- e^{-2u})^3}{r^3e^{-2u}} \right. \nonumber\\&& \left. + 2 k_B T\left(\log \frac{C(e^{2u}-1)}{r^2} + \mu\right)\frac{e^{2u}-1}{r} \right]  dr.
\end{eqnarray}
The first variation of the functional $E[\psi(r)]$ (equivalently, that of $E[u(r)]$), yields the Euler--Lagrange equation,
\begin{eqnarray}
     &&K_2 \left(\frac{u_r}{r}\right)_r -2\frac{K_3-K_2}{r^3}e^{-4u} - \frac{K_3}{r^3}( e^{2u}-3 e^{-2u}) -  K_2 \frac{2}{r^3}\nonumber\\&=&\frac{4k_B Te^{2u}}{r}\left(\log \frac{C(e^{2u}-1)}{r^2} + \mu + 1 \right).   \label{eqn-u}
\end{eqnarray}
The natural boundary conditions stemming from the first variation of the total energy are
\begin{equation}
\begin{cases}
    K_2 R_1 u_r(R_1) = -K_2 (1 - e^{-2u(R_1)}) + (K_2 + K_4) (e^{2u(R_1)}-e^{-2u(R_1)}),\\
    K_2 R_2 u_r(R_2) = -K_2 (1 - e^{-2u(R_2)}) + (K_2 + K_4) (e^{2u(R_2)}-e^{-2u(R_2)}).
    \end{cases}
\end{equation}
Alternatively, we impose Dirichlet boundary conditions,
\begin{equation}
    u(R_1) = M_1, \ \ u(R_2) = M_2. \label{dirichlet_BC}
\end{equation}
In the following sections, we impose  Dirichlet boundary conditions on the helical angle $\psi(r)$, guided by physical considerations at  the inner and outer radii of the ordered DNA region. 

At the outer boundary $r = R_2$, corresponding to the capsid wall, we assume a vanishing helical pitch $p = \epsilon \to 0$,  which reflects a strong anchoring to a concentric  spooling configuration. Accordingly, the  helical angle takes the form  $\psi(R_2) = \arctan\left(\frac{\epsilon}{2\pi R_2}\right)$, and is small in the limit $\epsilon\to 0.$ Consequently, the boundary value $M_2 = \left| \ln \sin \psi(R_2) \right|$ is large, since $\sin \psi(R_2)$ is close to zero.

At the inner boundary $r = R_1$, where the disordered core begins, we assume that the helical pitch equals the radial position $R_1$. This reflects the assumption that only a single DNA layer can be accommodated within the disordered region if the inter-strand spacing exceeds $R_1$. Imposing the condition $2\pi R_1 \tan \psi(R_1) = R_1$  yields the helical angle 
\begin{equation}
    \psi(R_1) = \arctan\left(\frac{1}{2\pi}\right),\label{threshold_angle}
\end{equation}
and the corresponding boundary value  $M_1 = \left| \ln \sin \psi(R_1) \right|$. These boundary conditions ensure that the director field transitions smoothly from the tightly packed configuration near the capsid wall to the more open structure near the disordered core.

\subsection{Existence and uniqueness of solutions} In order to show that the boundary value problem consisting of  Eq. \eqref{eqn-u} and \eqref{dirichlet_BC} has a unique solution, we first rewrite it in terms of the new independent variable, 
\begin{equation}
    y = \log \frac{r}{R_1},\ \ \  y \in [0, \log R_2/R_1].
\end{equation}
The Euler--Lagrange equation becomes,
\begin{eqnarray}
    &&\big(e^{-2y}u_y\big)_y - 2e^{-2y}\left[(\alpha-1)e^{-4u} + \frac{\alpha}{2}(e^{2u}-3 e^{-2u}) + 1\right]\nonumber\\
    &=&\omega^2 \left( \log \frac{C(e^{2u}-1)}{R_1^2} -2y + \mu + 1 \right)e^{2u}, \quad y\in [0, \log\frac{R_2}{R_1}], \quad \text {with}\nonumber\\ && \quad \quad \alpha = K_3/K_2\,\, \text{ and }\,\, \omega^2= \frac{4R_1^2 k_B T}{K_2}.
    \label{hy-eqn}
\end{eqnarray}
The Dirichlet boundary conditions give
\begin{equation}
    u(0) = M_1, \ \ u(\log\frac{R_2}{R_1}) = M_2. \label{hy-BC1R}
\end{equation}

Let us introduce the notation
\begin{align}
    & G(u)= (\alpha-1) e^{-4u} +\frac{\alpha}{2}(e^{2u}-3 e^{-2u})+1,  \label{G}\\
    & h(y, u(y))= 2e^{-2y}G(u(y)),  \label{h}\\
    &  g(y, u(y)) =   \omega^2\left( \log \frac{C(e^{2u}-1)}{R_1^2} -2y + \mu + 1 \right)e^{2u}. \label{g}
\end{align}
The boundary value problem \eqref{hy-eqn}, \eqref{hy-BC1R} now takes the  form,
\begin{align}
(e^{-2y}&u_y)_y=f(y, u(y)) \triangleq h(y,u(y)) + g(y,u(y)), \quad y\in I:=[0, \, \log\frac{R_2}{R_1}]\label{uy} \\
&u(0)=M_1 < u(\log \frac{R_2}{R_1})=M_2. \label{uy-BCR1}
\end{align}
Note that $G'(u)= 4(1-\alpha) e^{-4u} +\alpha(e^{2u}+3 e^{-2u})>0$, so the function $G(u)$ is monotonic increasing on $u \in [0,\infty)$ with $G(0) =0$ and $\displaystyle \lim_{u\to +\infty} G(u) = + \infty$. Similarly, with fixed $y \in [0, \log \frac{R_2}{R_1}]$, it follows that $\displaystyle \lim_{u \to 0}  g(y,u) = -\infty$ and $\displaystyle \lim_{u \to +\infty}  g(y,u) = +\infty$. With these properties, we can estimate the maximum and minimum of the solution $u(y)$.
\begin{lemma}
Let $u(y) \geq 0$ be a  classical solution to the boundary value problem \eqref{uy}, \eqref{uy-BCR1}. Then $u(y)<u_{max}=max(M_2,\frac{1}{2} \log \left( \frac{R_2^2}{C} e^{-\mu - 1 } + 1\right))$, for all $y\in I$.
\end{lemma}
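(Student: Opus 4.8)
The plan is to prove this by the classical maximum principle applied to the divergence-form equation \eqref{uy}. Writing the left-hand side as $(e^{-2y}u_y)_y = e^{-2y}(u_{yy}-2u_y)$, I would let $y^{\ast}\in\bar I$ be a point at which the continuous function $u$ attains its maximum over the compact interval $\bar I$, and then split the argument according to whether $y^{\ast}$ lies on the boundary or in the interior of $I$.

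If $y^{\ast}\in\{0,\log(R_2/R_1)\}$, then the boundary conditions \eqref{uy-BCR1} together with the hypothesis $M_1<M_2$ give $u(y^{\ast})\le M_2\le u_{max}$, and nothing more is needed. The substantive case is an interior maximizer $y^{\ast}\in(0,\log(R_2/R_1))$, where the first- and second-order conditions yield $u_y(y^{\ast})=0$ and $u_{yy}(y^{\ast})\le 0$. Substituting into the rewritten equation gives $(e^{-2y}u_y)_y\big|_{y^{\ast}}=e^{-2y^{\ast}}u_{yy}(y^{\ast})\le 0$, hence $f(y^{\ast},u(y^{\ast}))=h(y^{\ast},u(y^{\ast}))+g(y^{\ast},u(y^{\ast}))\le 0$.

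The key structural input is the sign of the two pieces of $f$. Since the solution satisfies $u\ge 0$ and $G$ is increasing with $G(0)=0$ (as recorded just before the statement), we have $G(u(y^{\ast}))\ge 0$, so $h(y^{\ast},u(y^{\ast}))=2e^{-2y^{\ast}}G(u(y^{\ast}))\ge 0$. This forces $g(y^{\ast},u(y^{\ast}))\le 0$, and because the prefactor $\omega^2 e^{2u(y^{\ast})}$ in \eqref{g} is strictly positive, the bracket itself must be nonpositive:
\begin{equation*}
\log\frac{C\big(e^{2u(y^{\ast})}-1\big)}{R_1^2}-2y^{\ast}+\mu+1\le 0.
\end{equation*}
I would then use the interior location $y^{\ast}<\log(R_2/R_1)$ to replace $2y^{\ast}$ by its strictly larger upper bound $2\log(R_2/R_1)=\log(R_2^2/R_1^2)$, which turns the inequality strict, and exponentiate and solve for $u(y^{\ast})$. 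This produces $e^{2u(y^{\ast})}<\tfrac{R_2^2}{C}e^{-\mu-1}+1$, i.e. $u(y^{\ast})<\tfrac12\log\big(\tfrac{R_2^2}{C}e^{-\mu-1}+1\big)\le u_{max}$, closing the interior case.

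Combining the two cases bounds the maximum of $u$ by $u_{max}$, with the strict inequality originating precisely in the interior analysis, where the gap $\log(R_2/R_1)-y^{\ast}>0$ is what upgrades $\le$ to $<$. I expect the main obstacle to be the careful bookkeeping of signs: one must check that the hypothesis $u\ge 0$ is exactly what guarantees $h\ge 0$ (and that $e^{2u}-1>0$, so $g$ is well defined at the maximizer), and then confirm that positivity of the prefactor transfers the sign information cleanly from $g\le 0$ to the logarithmic bracket. Everything after that is the monotone inversion of the exponential, which is routine.
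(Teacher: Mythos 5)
Your proposal is correct and follows essentially the same route as the paper's proof: a maximum-principle argument at an interior maximizer, using $u_y(y^\ast)=0$, the sign of $u_{yy}(y^\ast)$, the nonnegativity of $h$ (from $u\ge 0$, $G$ increasing, $G(0)=0$), and the positivity of the prefactor $\omega^2 e^{2u}$ to transfer the sign to the logarithmic bracket, then inverting the exponential with $y^\ast<\log(R_2/R_1)$. The only difference is presentational: the paper argues by contradiction (assuming $u(y_0)$ exceeds the bound forces $g>0$ and hence $u_{yy}(y_0)>0$, contradicting maximality), whereas you run the same inequalities directly, which is logically equivalent.
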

\begin{proof}
    If the maximum of $u(y)$ is attained on the boundary, then $max(u(y))=M_2$. Otherwise, the maximum of $u(y)$ is attained at $y = y_0 \in (0,\log\frac{R_2}{R_1})$, then $u_y(y_0)=0$. Suppose $u(y_0)>\frac{1}{2} \log \left( \frac{R_2^2}{C} e^{-\mu - 1 } + 1\right)$, then $ g(y_0,u(y_0)) > 0$. Using Eq. \eqref{uy}, we have $u_{yy}(y_0) > 0$, which means $u(y_0)$ is a local minimum, contradicts with $u(y_0)$ is the maximum of $u(y)$.
\end{proof}

\begin{lemma}
    Let $u(y) \geq 0$ be a classical solution to the boundary value problem \eqref{uy}, \eqref{uy-BCR1}. Then there exists $u_{min}>0$, such that $u(y)>u_{min}$, for all $y\in I.$
\end{lemma}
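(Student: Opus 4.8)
The plan is to obtain the positive lower bound by a minimum-principle argument dual to the one used for Lemma 1, the key being that the reaction term $f(y,u)$ is driven to $-\infty$ as $u\to 0^+$. First I would record the signs of the two pieces of $f$ near $u=0$. Since $G$ is increasing on $[0,\infty)$ with $G(0)=0$, we have $G(u)\ge 0$, so $h(y,u)=2e^{-2y}G(u)\ge 0$ for all $y\in I$ and $u\ge 0$; moreover $h(y,u)\to 0$ \emph{uniformly} in $y$ as $u\to 0^+$, because $e^{-2y}$ is bounded on the compact interval $I$ while $G(u)\to 0$. By contrast, in $g(y,u)$ the factor $\log\frac{C(e^{2u}-1)}{R_1^2}$ diverges to $-\infty$ as $u\to 0^+$ (since $e^{2u}-1\to 0$), while the remaining factors $-2y+\mu+1$ and $e^{2u}$ stay bounded on $I$; hence $g(y,u)\to -\infty$ uniformly in $y$. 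Combining the two, $f(y,u)=h(y,u)+g(y,u)\to-\infty$ uniformly in $y\in I$, which furnishes a barrier: there exists $\delta>0$ such that $f(y,u)<0$ for every $y\in I$ and every $u\in(0,\delta]$.

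Next I would analyze the minimum of the continuous solution over the compact interval $I$, attained at some $y_\ast$. If $y_\ast$ is a boundary point, then the minimum equals $M_1$ (because $M_1<M_2$), and $M_1=\left|\ln\sin\psi(R_1)\right|>0$, since $\psi(R_1)=\arctan(1/2\pi)\in(0,\pi/2)$ forces $\sin\psi(R_1)\in(0,1)$. If instead $y_\ast\in(0,\log\frac{R_2}{R_1})$ is interior, then $u_y(y_\ast)=0$ and $u_{yy}(y_\ast)\ge 0$; using $(e^{-2y}u_y)_y=e^{-2y}(u_{yy}-2u_y)$ and evaluating at $y_\ast$ gives $(e^{-2y}u_y)_y(y_\ast)=e^{-2y_\ast}u_{yy}(y_\ast)\ge 0$, so Eq.~\eqref{uy} yields $f(y_\ast,u(y_\ast))\ge 0$. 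By the barrier this is impossible whenever $u(y_\ast)\le\delta$, forcing $u(y_\ast)>\delta$. (The value $u(y_\ast)=0$ is excluded outright, as $g$, and therefore the equation itself, is undefined there.)

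In either case $\min_{y\in I}u(y)\ge\min(M_1,\delta)>0$, so the claim follows with, for instance, $u_{min}=\tfrac12\min(M_1,\delta)$, giving the strict inequality $u(y)>u_{min}$. The one genuinely delicate point I expect is the uniformity in $y$ of the divergence $f(y,u)\to-\infty$: one must verify that the nonnegative splay/twist contribution $h$ cannot offset the entropic singularity carried by $g$ near $u=0$. This is precisely what the sign analysis of the first step guarantees, since $h$ stays bounded (indeed vanishes) as $u\to 0^+$ while $g$ blows up to $-\infty$; the boundary case is then immediate once $M_1>0$ is checked.
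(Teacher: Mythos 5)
Your proof is correct and follows essentially the same route as the paper's: both exclude an interior minimum at small $u$ by observing that $f(y,u)\to-\infty$ as $u\to 0^+$ (the entropic term $g$ dominates the nonnegative term $h=2e^{-2y}G$), so at such a point the equation would force $u_{yy}<0$, contradicting minimality, while the boundary case gives the value $M_1>0$. Your version is in fact slightly more careful than the paper's, since you establish the barrier $f(y,u)<0$ for \emph{all} $u\in(0,\delta]$ uniformly in $y\in I$ --- which is what the contradiction argument actually requires --- whereas the paper only asserts $f(y,u_{min})<0$ at the single value $u_{min}$.
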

\begin{proof}
    If the minimum of $u(y)$ is attained on the boundary, then $min_{y\in I}u(y)=M_1$. Otherwise, suppose that the minimum of $u(y)$ occurs at an interior point $y = y_0 \in (0,\log\frac{R_2}{R_1})$. In this case, we have $u_y(y_0)=0$, and we consider the behavior of the second derivative at $y_0$.  From the properties of the functions $G(u)$ and $g(y,u)$, there exists a value $u_{min}$ (possibly depending on $\mu$ and $\alpha$) such that $f(y,u_{min}) < 0$,  for all $y \in (0,\log\frac{R_2}{R_1})$.  Now, suppose for contradiction that  $u(y_0) < u_{min}$. Then, evaluating the differential equation at $y_0$,  we find that  $u_{yy}(y_0)<0$, implying that  $u(y_0)$ is a local maximum of $u $.  This contradicts the assumption that that $u(y_0)$ is the minimum of $u(y)$.
    Therefore, it must be that $u(y)\geq u_{min}  $ for all $y\in I$, and the minimum of $u$ is either equal to $M_1$ (if on the boundary) or greater or equal than $u_{min}$ (if in the interior). 
\end{proof}

 We can now establish the following results for existence and uniqueness of solution.
 \begin{theorem} \label{SL}
 Let $0<R_1<R_2$ and $0<M_1<M_2$. The boundary value problem for equations \eqref{hy-eqn}-\eqref{hy-BC1R} has a unique  classical solution.
 \end{theorem}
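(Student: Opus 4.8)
The plan is to treat \eqref{uy}--\eqref{uy-BCR1} as a semilinear Sturm--Liouville boundary value problem in self-adjoint form $(p\,u_y)_y = f(y,u)$ with the strictly positive weight $p(y)=e^{-2y}$, and to split the argument into existence and uniqueness, using the a priori bounds throughout. The two preceding lemmas (Lemmas 1 and 2) confine every classical solution to a compact interval $[m,M]$ with $0 < m \le M < \infty$, where $M=u_{max}$ and $m=\min(M_1,u_{min})$. This compactness is what makes both the fixed-point and the comparison arguments tractable, since $f\in C^1\big(I\times(0,\infty)\big)$ is then Lipschitz in $u$ on the relevant band.

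For existence I would use the method of lower and upper solutions. The limiting behavior recorded before Lemma 1 supplies constant barriers: since $\lim_{u\to 0^+} f(y,u)=-\infty$ and $\lim_{u\to+\infty} f(y,u)=+\infty$ uniformly for $y\in I$ (the factor $e^{-2y}$ and the shift $-2y$ being bounded on the compact interval $I$), one may pick a constant $c_-\in(0,M_1]$ small enough that $f(y,c_-)\le 0$ and a constant $c_+\ge M_2$ large enough that $f(y,c_+)\ge 0$. These are, respectively, a lower and an upper solution straddling the Dirichlet data $M_1<M_2$. Because $f$ is $C^1$ on the compact band $[c_-,c_+]$, its $u$-derivative is bounded below there, so a constant $\lambda\ge 0$ can be chosen for which $u\mapsto f(y,u)+\lambda u$ is nondecreasing; the standard monotone iteration, equivalently a Schauder fixed-point argument for the integral operator built from the Green's function of the linear map $u\mapsto (p\,u_y)_y-\lambda u$ with homogeneous Dirichlet conditions, then produces a classical solution with $c_-\le u\le c_+$.

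For uniqueness I would argue by the maximum principle in its energy form. Given two solutions $u_1,u_2$, the difference $w=u_1-u_2$ vanishes at both endpoints and satisfies $(p\,w_y)_y = f(y,u_1)-f(y,u_2)=q(y)\,w$, where $q(y)=\int_0^1 f_u\big(y,\,u_2+t w\big)\,dt$. Multiplying by $w$, integrating over $I$, and discarding the vanishing boundary term gives $\int_I p\,w_y^2\,dy = -\int_I q(y)\,w^2\,dy$; once $q\ge 0$ this forces $w_y\equiv 0$ and hence $w\equiv 0$. Thus uniqueness reduces to the monotonicity statement $f_u(y,u)\ge 0$ on the a priori band $[m,M]$.

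Establishing that monotonicity is the crux, and I expect it to be the main obstacle. Writing $f_u=h_u+g_u$, the first contribution $h_u = 2e^{-2y}G'(u)$ is strictly positive by the identity $G'(u)=4(1-\alpha)e^{-4u}+\alpha(e^{2u}+3e^{-2u})>0$ already recorded. The delicate term is $g_u=\omega^2 e^{2u}\big(B'(u)+2B(u)\big)$ with $B(u)=\log\frac{C(e^{2u}-1)}{R_1^2}-2y+\mu+1$ and $B'(u)=\frac{2e^{2u}}{e^{2u}-1}$: the derivative $B'(u)$ blows up like $1/u$ as $u\to 0^+$, which secures positivity of $g_u$ near the lower end of the band, while for large $u$ the term $2B(u)\sim 4u$ is positive, so the only window where $g_u$ can threaten to be negative is an intermediate range where $B(u)$ is negative and $B'(u)$ has already decayed toward $2$. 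I would resolve this by using the a priori lower bound $u\ge m>0$ to bound $\log(e^{2u}-1)$ from below, and then absorbing any residual negativity of $g_u$ into the strictly positive, exponentially growing $h_u=2e^{-2y}\alpha e^{2u}\big(1+o(1)\big)$; equivalently, the entire uniqueness step can be recast as strict convexity in $u$ of the reduced energy integrand, since $f_u\ge 0$ is exactly the condition $V_{uu}\ge 0$ for the potential $V$ whose Euler--Lagrange equation is \eqref{uy}.
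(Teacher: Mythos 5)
Your existence argument is sound and is a legitimate (if differently packaged) alternative to the paper's. Constant barriers do exist: since $g(y,u)\to-\infty$ as $u\to 0^{+}$ and $f(y,u)\to+\infty$ as $u\to\infty$, uniformly on the compact interval $I$, one can pick $c_-\in(0,M_1]$ and $c_+\ge M_2$ with $f(y,c_-)\le 0\le f(y,c_+)$; since $f$ is continuous, independent of $u_y$, and $C^1$ in $u$ on the band $[c_-,c_+]$, the standard lower/upper-solution and monotone-iteration machinery yields a classical solution between them. The paper reaches existence by a close cousin of this: it homogenizes the boundary data, truncates $f$ outside the a priori interval $[u_{min},u_{max}]$ supplied by Lemmas 1 and 2, checks the Carath\'eodory conditions and the uniform ellipticity of $A(y)=e^{-2y}$, and then quotes Theorem 3.1 of \cite{chern1995nonlinear}; truncation is essentially how the barrier method is proved, so for existence the difference is mostly bookkeeping.

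The genuine gap is in uniqueness. Your energy identity correctly reduces uniqueness to $f_u(y,u)\ge 0$ on the a priori band, but that inequality is false in the stated generality, and the proposed absorption cannot repair it. Explicitly, $g_u = 2\omega^2 e^{2u}\bigl[\tfrac{e^{2u}}{e^{2u}-1}+\log\tfrac{C(e^{2u}-1)}{R_1^2}-2y+\mu+1\bigr]$; at $e^{2u}=2$ the bracket equals $3+\log\tfrac{C}{R_1^2}+\mu-2y$, which is arbitrarily negative for $\mu$ sufficiently negative (the theorem imposes no restriction on $\mu$, $\omega^2$, $\alpha$, $C$; the paper's own simulations use $\mu=-2$). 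At that same point $h_u=2e^{-2y}G'(u)=2e^{-2y}\bigl(1+\tfrac{5}{2}\alpha\bigr)$ is a fixed finite quantity, so $f_u<0$ as soon as $4\omega^2\bigl|3+\log\tfrac{C}{R_1^2}+\mu-2y\bigr| > 2e^{-2y}\bigl(1+\tfrac{5}{2}\alpha\bigr)$. The absorption fails structurally, not merely quantitatively: the negative part of $g_u$ carries the same $e^{2u}$ factor as the dominant term $2\alpha e^{-2y}e^{2u}$ of $h_u$, so the ``exponentially growing'' $h_u$ never outruns it; what is actually being compared are the prefactors $\alpha e^{-2y}$ and $\omega^2\bigl|3+\log(C/R_1^2)+\mu-2y\bigr|$, which are independent parameters. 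The a priori lower bound $u\ge m$ is also of no help, because the bad region sits at the intermediate value $u=\tfrac12\log 2$, which every solution must cross (by the intermediate value theorem) whenever $M_1<\tfrac12\log 2<M_2$. Hence your argument proves uniqueness only under an unstated smallness condition, roughly $\omega^2\bigl(|\mu|+|\log(C/R_1^2)|+2\log(R_2/R_1)+3\bigr)\lesssim \alpha R_1^2/R_2^2$. The paper never confronts this issue: it does not verify any monotonicity of $f$, but instead imports uniqueness from Theorem 4.1 of \cite{chern1995nonlinear}, applied to the truncated Carath\'eodory problem. To close your proof you would need to either restrict the parameters, invoke such a citation, or replace monotonicity of $f$ by a weaker mechanism (for instance a weighted Poincar\'e estimate showing that the negative part of $q$ stays below the first eigenvalue of $w\mapsto -(e^{-2y}w_y)_y$ on $I$ with zero boundary values) --- none of which is in the proposal.
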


\begin{proof} First, we apply a change of variables to transform the boundary conditions into homogeneous ones. Specifically, we define
\begin{equation}
    \tilde{u}(y) = u(y) - M_1 - \frac{M_2 - M_1}{\log \frac{R_2}{R_1}} y.
\end{equation}
So, the boundary value problem can be written as,
\begin{equation}
    (e^{-2y} \tilde{u}_y)_y = f(y, u) + 2 \frac{M_2 - M_1}{\log \frac{R_2}{R_1}}, \ \ \tilde{u}(0) = 0, \ \ \tilde{u}(\log \frac{R_2}{R_1}) = 0.
\end{equation}
Its solution is bounded by $\tilde{u}_{min} = u_{min} - M_2$, and $\tilde{u}_{max} = u_{max} - M_1$, with $u_{min}$ and $u_{max}$ from Lemma 1 and Lemma 2. Now define $A(y) = e^{-2y}$, and
\begin{equation}
    \tilde{f}(y,\tilde{u},v) = \begin{cases}
        f(y,u_{min}) + 2 \frac{M_2 - M_1}{\log \frac{R_2}{R_1}}, \ \ \ \text{ if }\  \tilde{u}+ M_1 + \frac{M_2 - M_1}{\log \frac{R_2}{R_1}} y<u_{min},\\
        f(y,u_{max})+ 2 \frac{M_2 - M_1}{\log \frac{R_2}{R_1}},  \ \ \ \text{ if }\  \tilde{u}+ M_1 + \frac{M_2 - M_1}{\log \frac{R_2}{R_1}} y>u_{max},\\
        f(y,\tilde{u}+ M_1 + \frac{M_2 - M_1}{\log \frac{R_2}{R_1}} y)+ 2 \frac{M_2 - M_1}{\log \frac{R_2}{R_1}}, \hspace{40pt} \text{otherwise.}
    \end{cases}
\end{equation}
and consider the boundary value problem,
\begin{equation}
    (A(y) \tilde{u}_y)_y = \tilde{f}(y, \tilde{u}, \tilde{u}_y),  \ \ \tilde{u}(0) = 0, \ \ \tilde{u}(\log \frac{R_2}{R_1}) = 0,
\end{equation}
We can readily verify, the function $\tilde{f}(y,\tilde{u},v)$ satisfies the Carath\'eodory conditions:
    \begin{enumerate}
    \item For every $y\in [0,\log\frac{R_2}{R_1}]$, the mapping  $ (\tilde{u},v) \longrightarrow  \tilde{f}(y, \tilde{u}, v)$ is continuous.
    \item For every $ (\tilde{u},v) \in \mathbb R^2, $ the mapping $y\longrightarrow \tilde{f}(y, \tilde{u}, v)$ is measurable.
    \item For every $r>0$, there exits a function $g(y)\in L^1([0,\log\frac{R_2}{R_1}])$ such that, for all $|\tilde{u}|\leq r$, $|v|<r$, and $y\in [0,\log\frac{R_2}{R_1}]$, 
    $$ \displaystyle |\tilde{f}(y, \tilde{u}, v)| \leq g(y) \triangleq \max_{u \in [u_{min},u_{max}]} f(y,u)+ 2 \frac{M_2 - M_1}{\log \frac{R_2}{R_1}}.$$
\end{enumerate}
Moreover, the following properties hold:
\begin{enumerate}
    \item There exists a positive constant $\nu = \frac{R_1^2}{R_2^2}$ such that $(\xi, A(y) \xi) \geq  \nu |\xi|^2$, for all $\xi \in \mathbb R$ and $y \in I$.
    \item We have that  $\int \frac{1}{A(y)} dy = \frac{1}{2}e^{2y}>0$.
    \item For every $(\tilde{u},v) \in \mathbb R^2$, we have \,\, 
     $   f(y,\tilde{u},v) \leq g(y).$
\end{enumerate}
Therefore, by Theorems 3.1 and 4.1 in \cite{chern1995nonlinear}, there exists a unique solution $\tilde{u}(y)$. Consequently,  there also exists a unique  classical solution $u(y)$ to the boundary value problem \eqref{uy} - \eqref{uy-BCR1}.
\end{proof}

\section{DNA Reconstruction with Randomization}

The DNA trajectory can be reconstructed from its tangent vector field $\vec{n}$ by solving the  initial value problem,
\begin{equation}
    \frac{d \vec{r}}{d\xi} = \vec{n}(\vec{r}(\xi)), \,\vec{r}(0) =\vec{r}_0,
\end{equation}
where $\xi$ denotes the arc length along the DNA curve and $\vec{r}(0)$ specifies the position of one end of the reconstructed DNA molecule.  The tangent field $\vec{n}(\vec{r}(\xi))$ is taken as  the  minimizer of the Oseen--Frank theory. However, such a reconstruction yields a perfect helical curve without  knots, since the Oseen--Frank model is a mean-field approximation, that only captures the average molecular orientation and does not explicitly account for thermal fluctuations.

To overcome this limitation, we incorporate local thermal fluctuations into the DNA curve reconstruction using a Langevin approach, in which stochastic noise is introduced at the force level, balancing the restoring forces arising from the first variation of the Oseen--Frank energy, and the random force from Brownian motion. 
In cylindrical coordinates, the randomized DNA trajectory is represented as 
\begin{equation}
    \vec{r}(\xi) = (r(\xi), \theta(\xi), z(\xi)), \ \ \ \ \ \begin{cases}
        \displaystyle \frac{dr}{d\xi} = \sin \beta,\\
        \displaystyle \frac{d\theta}{d\xi} = \frac{\cos \beta \cos{\psi}}{r},\\
        \displaystyle \frac{dz}{d\xi} = \cos \beta \sin{\psi}.
    \end{cases}
    \label{Trajectory_Equations}
\end{equation}
The tangent direction is given by 
\begin{equation}
\vec{n} = (\sin \beta, \cos \beta \cos \psi, \cos \beta \sin \psi)
\end{equation}
which represents the perturbation of the average vector field $\vec{n}_0 = (0, \cos \psi_0, \sin \psi_0)$ obtained from the Oseen--Frank theory as the solution to Eq. \eqref{uy}. The stochastic angles $\beta(\xi)$ and $\psi(\xi)$,  describing deviations from the mean orientation, evolve according to the Langevin's equation, which is also known as Ornstein--Uhlenbeck process,
\begin{eqnarray}
\begin{cases}
    d\beta = \sigma_\beta dB_\beta - \kappa_\beta  \beta d{\xi},\\
    d\psi = \sigma_\psi dB_\psi - \kappa_\psi (\psi - \psi_0) d{\xi}
\end{cases}\label{LangevinNoise}
\end{eqnarray}
with initial condition $\beta(0) = 0$ and $\psi(0) = \psi_0$. Here $dB_\beta$ and $dB_\psi$ denote independent  Brownian motions and $\sigma_\beta$ and $\sigma_\psi$ are constants quantifying the thermal fluctuation.
The terms $\kappa_\beta \beta$ and $\kappa_\psi (\psi - \psi_0)$ represent restoring forces that drive the vector $\vec{n}$ towards the mean orientation $\vec{n}_0$.  The competition  between the two forces in Eq. \eqref{LangevinNoise} causes the angles to deviate from the mean-field description, while  the restoring forces control these deviations, leading to a Gaussian-type distributions of the angles.

Before estimating the constants $\kappa_\beta, \kappa_{\psi}$ and $\sigma_{\beta}, \sigma_{\psi}$, we first review the properties of the proposed system, in the linear response region. When all parameters $\sigma$ and $\kappa$, and $\psi_0$ are fixed constants, the Ornstein-–Uhlenbeck process \cite{kloeden1992stochastic} implies that  $\beta({\xi})$ and $\psi({\xi})$ are independent Gaussian random variables, with distributions given by
\begin{equation}
\begin{cases}
     \beta(\xi) \sim \mathcal{N}[0, \frac{\sigma_\beta^2}{2\kappa_\beta}(1-e^{-2\kappa_\beta {\xi}})],\\
    \psi(\xi) \sim \mathcal{N}[\psi_0, \frac{\sigma_\psi^2}{2\kappa_\psi}(1-e^{-2\kappa_\psi {\xi}})].
\end{cases}
\end{equation}
Here $\mathcal{N}(\mu,\sigma^2)$ represents the normal distribution with mean $\mu$ and variance $\sigma^2$. The expectations are consistent with the mean-field solution in Section \ref{continuum_model}, and the variance proportional to the ratio $\frac{\sigma^2}{\kappa}$.

The expectation of the tangent vector is given by
\begin{equation}
    \mathbb{E}[\vec{n}(\xi)] = \exp\left(-\frac{\sigma_\beta^2}{4\kappa_\beta}(1-e^{-2\kappa_\beta \xi}) -\frac{\sigma_\psi^2}{4\kappa_\psi}(1-e^{-2\kappa_\psi\xi})\right) (0, \cos \psi_0 , \sin \psi_0  ).
\end{equation}
This formulation  ensures that the randomized DNA curve, on average,  follows the same tangent direction as  the Oseen--Frank solution. Moreover, the variances of $\beta(\xi)$ and $\psi(\xi)$ remain bounded as $\xi \to \infty$, indicating  that deviations from the mean configuration are limited.  Assuming that $\beta$ is small and linearizing Eq. \eqref{Trajectory_Equations}, we obtain
\begin{equation}
\begin{cases}
    \mathbb{E}[r(\xi)] = r_0,\ \ \  \text{Var}[r(\xi)] = \frac{\sigma_\beta^2}{2\kappa_\beta}\left(\xi - \frac{1-e^{-2\kappa_\beta \xi}}{2\kappa_\beta}\right),\\
    \mathbb{E}[z(\xi)] = z_0 + \xi \sin \psi_0,\ \ \  \text{Var}[z(\xi)] = \frac{\sigma_\psi^2}{2\kappa_\psi}\left(\xi - \frac{1-e^{-2\kappa_\psi }\xi}{2\kappa_\beta}\right),
    \end{cases}\label{rz-variance}
\end{equation}
representing the accumulation of the mean and variance.

For long reconstructed curves, the  variance grows with $\xi$,   
causing individual realizations of 
${r}(\xi)$ to deviate significantly from the mean value ${r}_0$.
Since  the DNA trajectory must remain confined within the capsid, in reconstructing the trajectory    we need to further impose the condition $r(\xi) \leq R_2$. This   prevents $r(\xi)$ from becoming unphysically  large. It is achieved by modifying the first equation in \eqref{Trajectory_Equations} to the following:  
\begin{equation}
    \frac{dr}{d\xi} = \begin{cases}
        \sin \beta, \text{ if } r(\xi)<R_2 \text{ and } \beta \leq 0,\\
        0, \text{ otherwise.}
    \end{cases} \label{radius_correction}
\end{equation}
The situation near  the inner radius  $R_1$ is different, since there is no physical boundary 
preventing DNA from crossing into the disordered core. Instead of  enforcing $r > R_1$,  we regulate the dynamics by appropriately choosing $\kappa$ and $\sigma$, as described in the next section, so that   $r(\xi)$ does not become too small. 

\subsection{Estimation of the restoring forces}

The terms $\kappa_\beta \beta$ and $\kappa_\psi (\psi - \psi_0)$ in Eq. \eqref{LangevinNoise} describe the linear elastic, restoring forces  that act on the system, when the unit vector field $\vec{n}$ deviates from $\vec{n}_0$. To estimate the coefficients $\kappa_\beta$ and $\kappa_\psi$, we consider the perturbations separately, while ignoring the crossing terms, and focus on the contribution from the Oseen--Frank elastic energy. %

\paragraph{Perturbation in $\psi$} Consider the perturbed vector $\vec{n} = (0, \cos \psi, \sin \psi)$, with $\beta = 0$ and $\psi = \psi_0 + \delta \psi$. So the perturbed energy,
\begin{equation}
    E + \delta E = \int_\Omega \frac{K_3 |\vec{n} \times \nabla \times \vec{n}|^2 + K_2 (\vec{n} \cdot \nabla \times \vec{n})^2 + K_1 |\nabla \cdot \vec{n}|^2}{r^2 \tan^2(\psi_0+\delta \psi)} d\vec{x}
\end{equation}
The forces are given by the first variation of the energy $\frac{\delta E}{\delta \psi}(\psi)$. Since $\psi_0$ is the minimizer of the energy, so the Euler--Lagrange equation \eqref{uy} is equivalent to $\frac{\delta E}{\delta \psi}(\psi_0) = 0$. Then the linear force induced by a small perturbation around $\psi_0$ could be given by the Taylor's expansion of $\frac{\delta E}{\delta \psi}(\psi_0+\delta \psi)$. Moreover, since we are only interested in the coefficient of $\delta \psi$, we ignore the terms that come from $\delta \psi_r$, $\delta \psi_\theta$ and $\delta \psi_z$, 
thus,
\begin{equation}
    \frac{\delta E}{\delta \psi}\left(\psi_0 + \delta \psi \right) \propto \frac{1}{\tan^2 \psi_0} \left[\frac{K_2}{r^3}8e^{-4u} + \frac{K_3}{r^3}(2e^{2u}+6e^{-2u}-8e^{-4u})\right] \delta \psi. \label{force_psi}
\end{equation}
Notice, the function $\psi_0$ is undefined if $r < R_1$, thus $\psi_0 $ will be treated as a constant for each layer of reconstruction.
\paragraph{Perturbation in $\beta$} Following a similar approach, we consider the perturbed vector field, $\vec{n} = (\sin \beta, \cos \beta \cos \psi, \cos \beta \sin \psi)$.
Since $K_1 \gg K_2, K_3$, the $K_1$ term will dominate in the perturbed energy,
\begin{equation}
    \frac{\delta E}{\delta \beta}(\beta) \propto \frac{\beta}{r^3 \tan^2 \psi} K_1. \label{force_beta}
\end{equation}
For simplicity, we neglect the dependence on $\psi_0$ or $u$, as they are only defined on $[R_1,R_2]$, so that we estimate 
\begin{equation}
\kappa_\beta(r) = \frac{C_1 R_1^2}{r^3},\ \ \kappa_\psi(r) = \frac{C_2 R_1^2}{r^3}, \ \ \text{ for } \ r \in (0,R_2]. \label{kappa_r_dependent}
\end{equation}
Here $R_1^2$ is a constant chosen such that the constant coefficients $C_1$ and $C_2$ being dimensionless; their values will be determined in the numerical section. Comparing the forces in Eq. \eqref{force_psi} and \eqref{force_beta}, $C_1 \gg C_2$ as $K_1 \gg K_2$ and $K_3$. The form given by Eq. \eqref{kappa_r_dependent}, indicates the restoring force being strong when radius is small, thus preventing the reconstructed DNA filament from being  close to the center of the capsid.

\subsection{Estimation of noise terms} We  expect that different layers of helical curves may cross at certain locations (for  relatively small arc length $\xi$), hence allowing the formation of knots. To capture this effect,  we impose that the standard deviation of the radial displacement and the height displacement are comparable to the spacing between adjacent  DNA strands. This leads to the condition
\begin{equation}
    \sqrt{2 \pi r \frac{\sigma_\beta^2}{\kappa_\beta} } = \sqrt{2 \pi r \frac{\sigma_\psi^2}{\kappa_\psi} }\sim 2 \pi r \tan \psi_0.
\end{equation}
Assuming that, both $\sigma_\psi$ and $\sigma_\beta$ are constants representing the thermal fluctuations, we  express them as,
\begin{equation}
 \sigma_\beta = \sqrt{C_1} C_3 \tan \psi_0, \ \ \ \sigma_\psi = \sqrt{C_2} C_3 \tan \psi_0.
\end{equation}
Here $\psi_0=\psi(r_0)$ is treated as a constant during the reconstruction. The expression $C_3$ is thus a dimensionless constant to  be determined in the numerical section. 


\section{Numerical Methods and Examples} In this section, we present the numerical solutions obtained for the DNA curve reconstruction.

\subsection{Solution of the Euler--Lagrange Equation} \label{numerical_solution_pde} We first consider the solutions to Eq. \eqref{uy}-\eqref{uy-BCR1} under different parameter values of $\alpha$, $\mu$ and $\omega^2$.  The inner radius is fixed  at $R_1 = 10 nm$, consistent with estimates  for bacteriophage P4 \cite{walker2020fine}, while the outer radius is taken as  $R_2 = 20 nm$ \cite{shore1978determination}.

The system is solved using the finite element method implemented in Firedrake \cite{FiredrakeUserManual}, based on its weak formulation,
\begin{equation}
    \int_0^{\log \frac{R_2}{R_1}} e^{-2y} u_y v_y + f(y,u) v dy =0.
\end{equation}
The nonlinear terms in $f(y,u)$ are treated  using the built-in Newton iteration scheme.

\begin{figure}[htbp]
    \centering
    \includegraphics[width=0.32\linewidth]{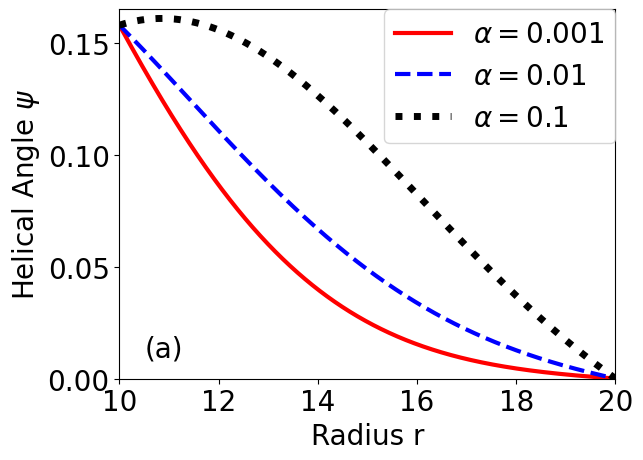}
    \includegraphics[width=0.32\linewidth]{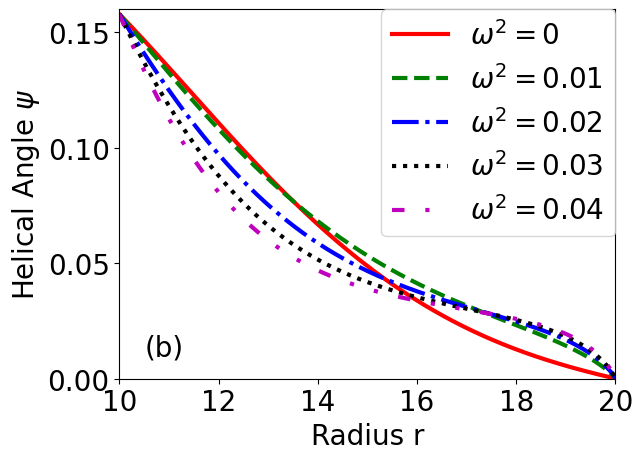}
    \includegraphics[width=0.32\linewidth]{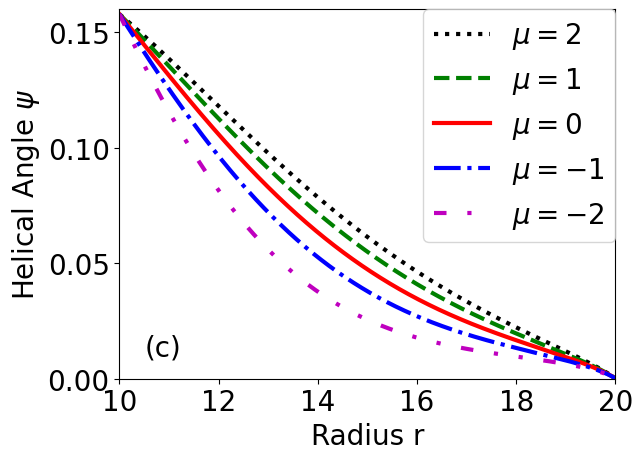}
    \includegraphics[width=0.32\linewidth]{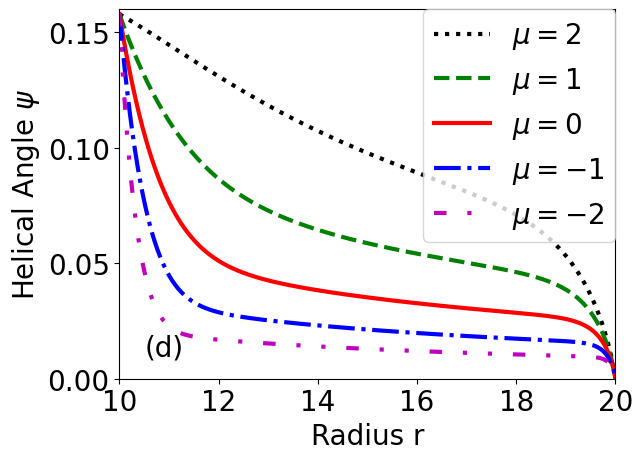}
    \includegraphics[width=0.32\linewidth]{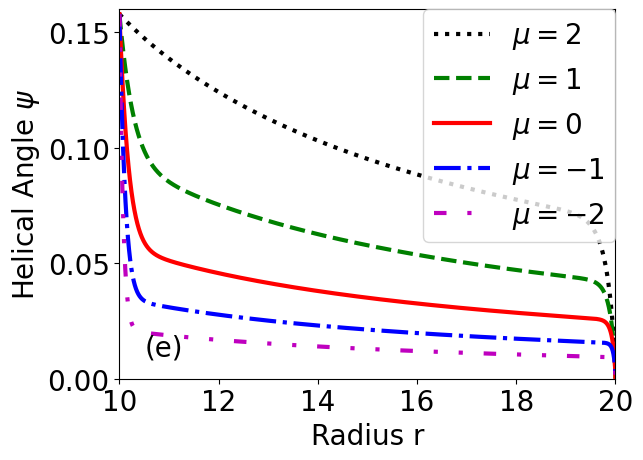}
    \includegraphics[width=0.32\linewidth]{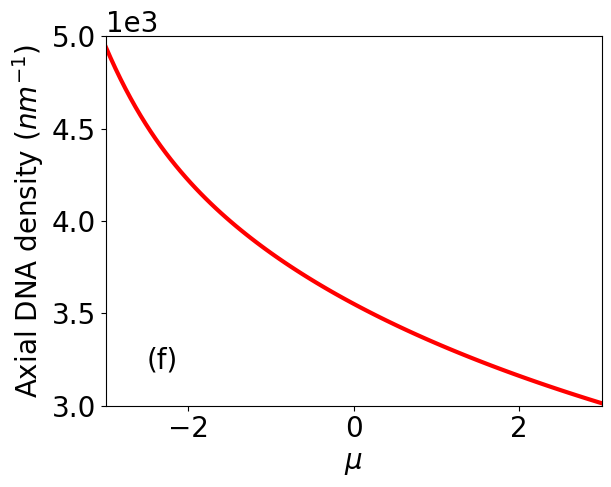}
    \caption{Helical angle $\psi(r)$ on $[R_1,R_2]$, given by Eq. \eqref{uy}. Panel (a): Simplified Eq. \eqref{uy_simplify}. Panel (b): $\alpha = 0.01$. Panel (c): $\alpha = 0.001$, $\omega^2 = 0.003$. Panel (d): $\alpha = 0.1$, $\omega^2=0.3$. Panel (e): $\alpha = 2$, $\omega^2 = 6$. Panel (f): Packed DNA length, $\alpha = 0.001$, $\omega^2 = 0.003$. The axial density is uniform in $z$ direction. }
    \label{solution_PDE}
\end{figure}
Figure \ref{solution_PDE} (a) shows the case when $\omega^2=0$ in which Eq. \eqref{uy} simplifies to 
    \begin{equation}
    \big(e^{-2y}u_y\big)_y - 2e^{-2y}\big((\alpha-1)e^{-4u} + \frac{\alpha}{2}(e^{2u}-3 e^{-2u}) + 1\big) = 0.\label{uy_simplify}
\end{equation}
Following a similar argument as in Theorem \ref{SL}, the equation with Dirichlet boundary conditions Eq. \eqref{uy-BCR1} admits a unique solution. We note that Eq. \eqref{uy_simplify} corresponds to the regime in which the Oseen--Frank energy dominates over the entropic and chemical potential contributions, so that we neglect the entropic and chemical potential terms. In this setting, the Frank constants appearing in Eq.~\eqref{uy_simplify} should be interpreted as effective parameters that incorporate additional influences, such as the finite confinement and other microscopic interactions. 

The validity and limitations of such a simplified model are shown as in Panel (a), where the helical angle $\psi(r)$ depends solely on $\alpha$:  smaller values of  $\alpha$ produce smaller helical angles near  the capsid radius and higher angles near the disordered region. For example,  when $\alpha = 0.1$, the helical angle $\psi$ becomes non-monotonic and  may exceed   the threshold angle given by Eq. \eqref{threshold_angle}. Comparing with other panels, the shape of $\psi(r)$ is well captured, except at the two ends near $R_1$ and $R_2$.

Returning to the general equations \eqref{uy}-\eqref{uy-BCR1}, 
we estimate the value of $\omega^2$ using the expression for the bending modulus \cite{tzlil2003,klug2003director}:
\begin{equation}\rho K_3 = k_B T l_p \rho/\eta.
\end{equation} 
Assuming a DNA persistence length of $l_p = 50 nm$, the definition of $\omega^2$ in Eq. \eqref{hy-eqn} reduces to: 
\begin{equation}
   \omega^2 = \frac{4 R_1^2 k_B T}{K_2} = \frac{4 R_1^2 K_3 \rho}{K_2 l_p \rho/\eta} = \frac{4 (10 nm)^2 K_3}{K_2 50 nm \times 3 nm} \approx 3 \alpha. \label{Bending_modulus}
\end{equation}
 We therefore conclude that $\omega^2$ and $\alpha$ are of the same order of magnitude and note that the actual proportionality factor  depends on the value of $R_1$, which for $P4$ phage is currently an estimate.

Panel(b) in Figure \ref{solution_PDE} shows the solution for $\alpha = 0.01$, with $\omega^2$ ranging from $0$ to $0.04$. We see that the difference in the helical angle remain negligible for all values of $\omega^2$ near the capsid $R_2=20$. However, these differences gradually increase with decreasing radius, reaching  up to 0.05 for $r=13$.

Panels (c), (d), (e) in Figure \ref{solution_PDE} illustrate  the role of the chemical potential of DNA, $\mu$. Across these panels, we observe that as $\mu$ decreases, the helical angle becomes smaller, with  the changes in $\psi$ being relatively small throughout,  except near $R_1$ and $R_2$.  These relations imply that as $\mu$ decreases, more DNA is packed into the capsid. The dependence of the  packed DNA length on $\mu$ is shown in Panel (f) in Figure \ref{solution_PDE}.

\subsection{DNA Reconstruction}\label{reconstruction_one_layer} We consider the reconstruction of the randomized DNA curve given by Eq. \eqref{Trajectory_Equations}. As discussed above, we take the values for bacteriophage P4:  $R_1 = 10\ nm$ and $R_2=20\ nm$. The total reconstructed length in each layer is $2000\ nm$, mimicking the fully packaged $4000\ nm$ genome of bacteriophage P4. 

Eq. \eqref{Trajectory_Equations} is solved using the Euler--Maruyama's method, which is equivalent to Milstein's method since both $\sigma_\beta$ and $\sigma_\psi$ are  constants. This numerical scheme has both 1st order weak convergence and strong convergence.

\begin{figure}[!htbp]
    \centering
    \includegraphics[width=0.32\linewidth]{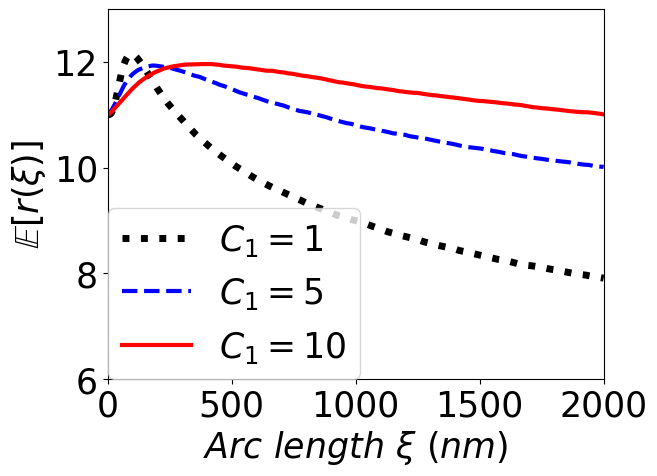}
    \includegraphics[width=0.32\linewidth]{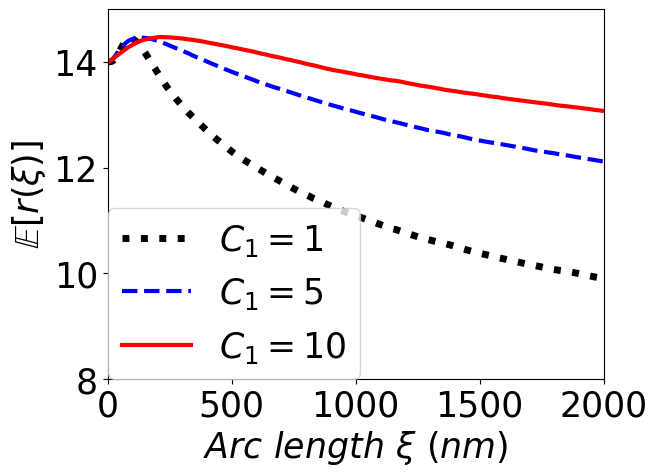}
    \includegraphics[width=0.32\linewidth]{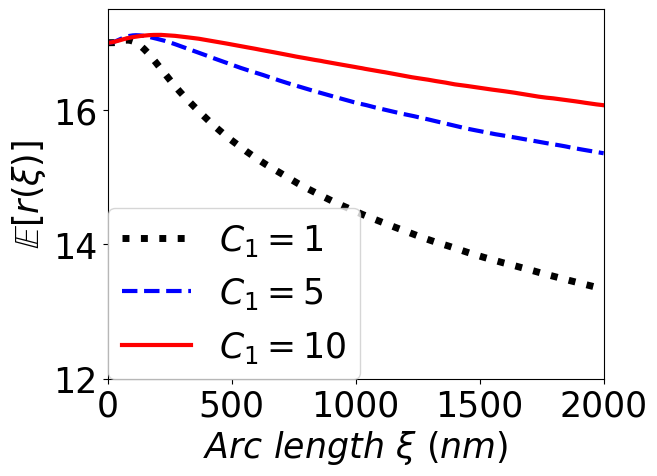}
    \includegraphics[width=0.32\linewidth]{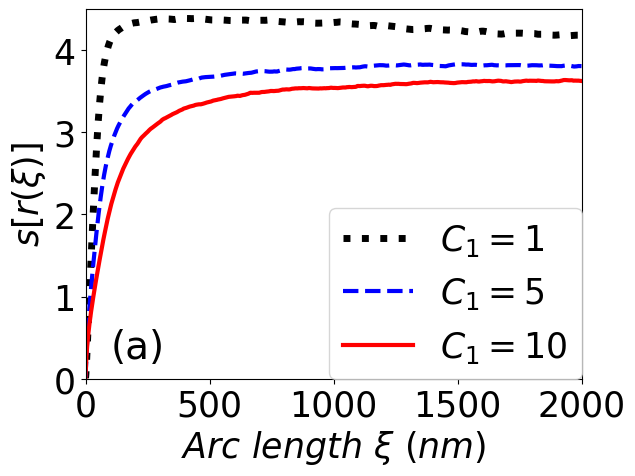}
    \includegraphics[width=0.32\linewidth]{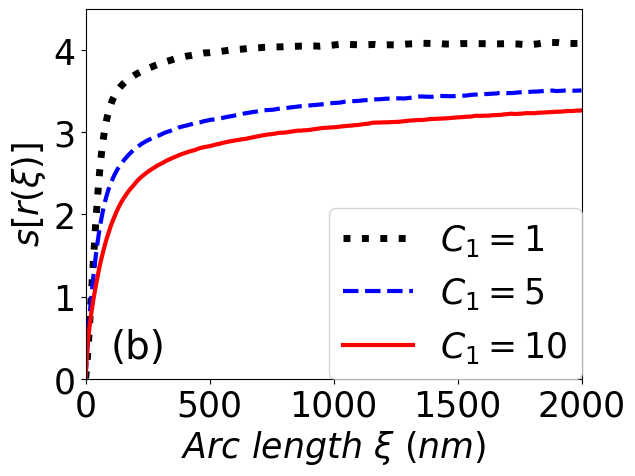}
    \includegraphics[width=0.32\linewidth]{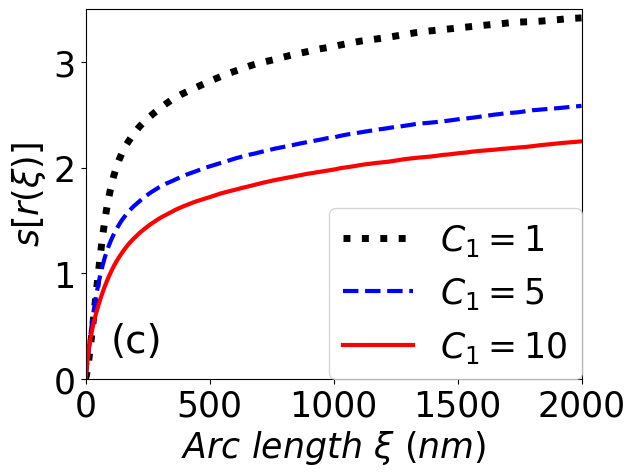}
    \includegraphics[width=0.32\linewidth]{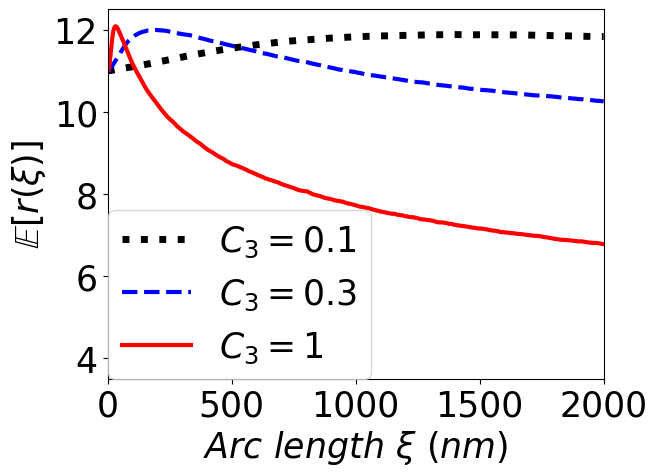}
    \includegraphics[width=0.32\linewidth]{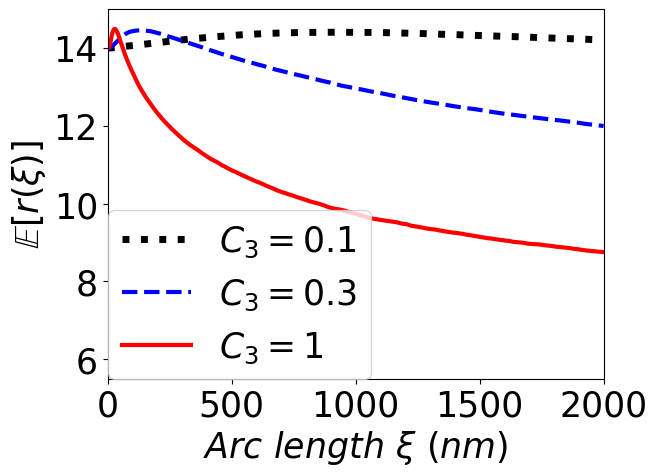}
    \includegraphics[width=0.32\linewidth]{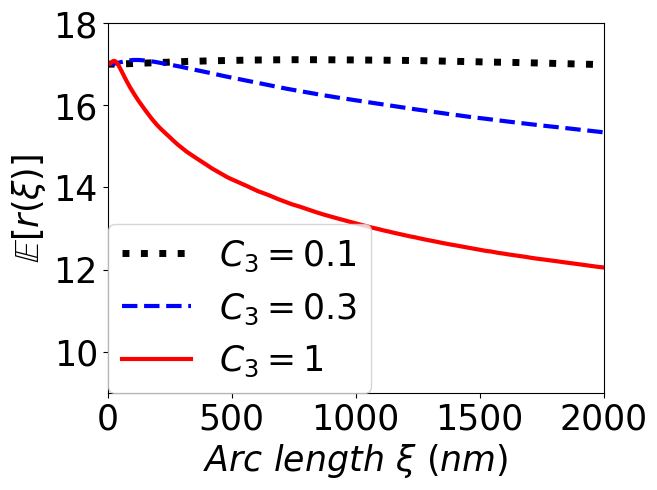}
    \includegraphics[width=0.32\linewidth]{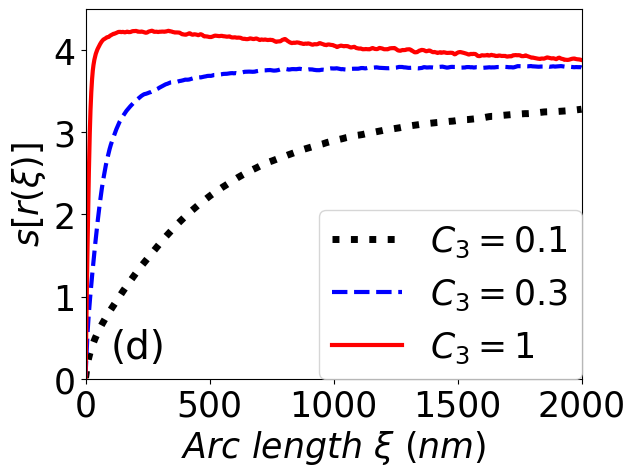}
    \includegraphics[width=0.32\linewidth]{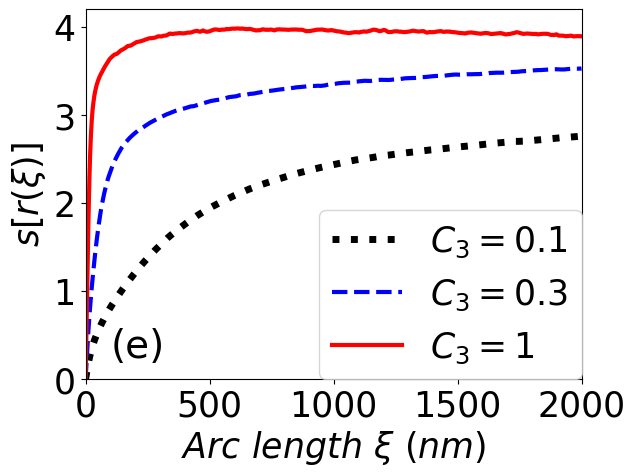}
    \includegraphics[width=0.32\linewidth]{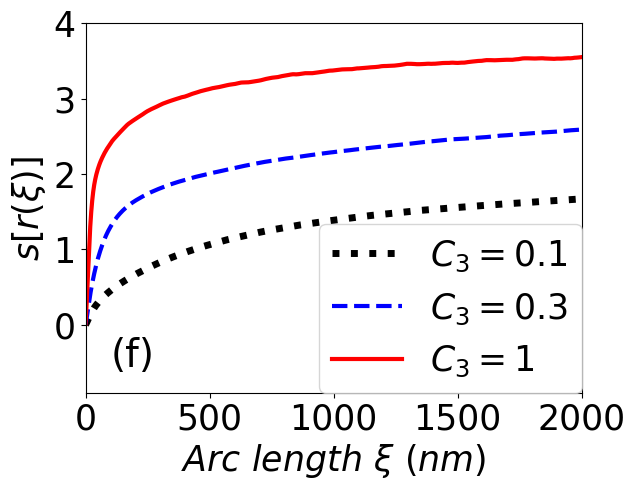}

    \caption{Statistic mean and standard deviation of $r(\xi)$, given by Eq. \eqref{Trajectory_Equations} and \eqref{LangevinNoise}. For each panel, $N=100000$ realizations/paths are generated to evaluate the sample mean and sample standard deviation. Helical angle $\psi_0$ is obtained from the case $\alpha=0.001$, $\omega^2=0$. Panels (a), (b), (c): $C_3=0.3$. Panels (d), (e), (f): $C_1=5$. Panel (a), (d): $r_0 = 11$; Panel (b), (e): $r_0 = 14$; Panel (c), (f): $r_0 = 17$.}
    \label{mean_variance_radius}
\end{figure}

In Fig. \ref{mean_variance_radius}, to properly choose the randomization constants $C_1$ and $C_3$, we plot the expectations and standard deviations of the reconstructed DNA trajectory, with initial radius $r_0=11$, $14$ and $17$ separately. For all of the cases, with a small value of $C_1$, equivalently the restoring force given by $\kappa_\beta$ being weak, the mean value of the DNA curve deviates from the initial $r_0$, as the arc length increases. On the other hand, when $C_1$ is large, the standard deviation of the DNA curve becomes relatively small. Similarly, when $C_3$ is large, representing strong thermal fluctuation, the deviation of the mean value is significant; and when $C_3$ is small, the standard deviation of the reconstructed samples becomes small. In practice, we aim for  the mean value to remain close to $r_0$ preserving a well-defined layering structure in the reconstruction, while allowing the variance to be nonzero, so that intersections between different layers are possible. 
\begin{figure}[!htbp]
    \centering
    \includegraphics[width=0.32\linewidth]{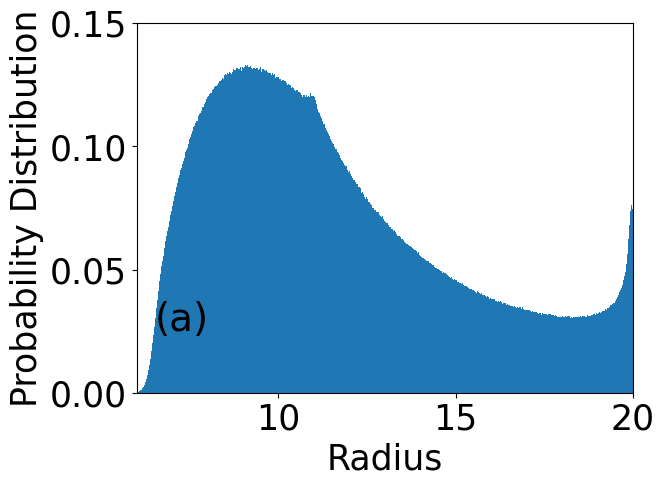}
    \includegraphics[width=0.32\linewidth]{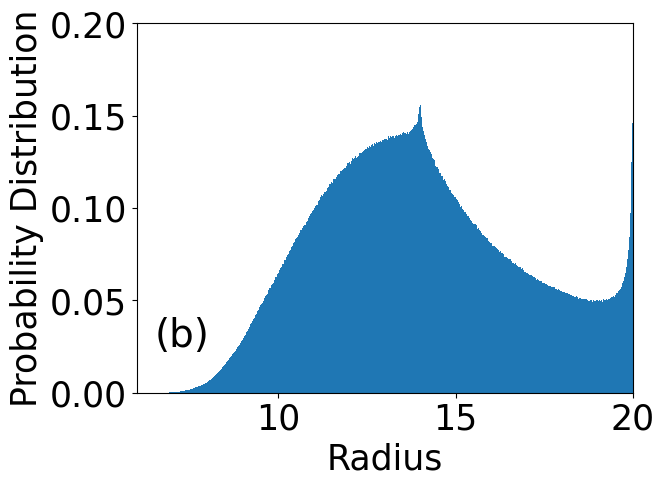}
    \includegraphics[width=0.32\linewidth]{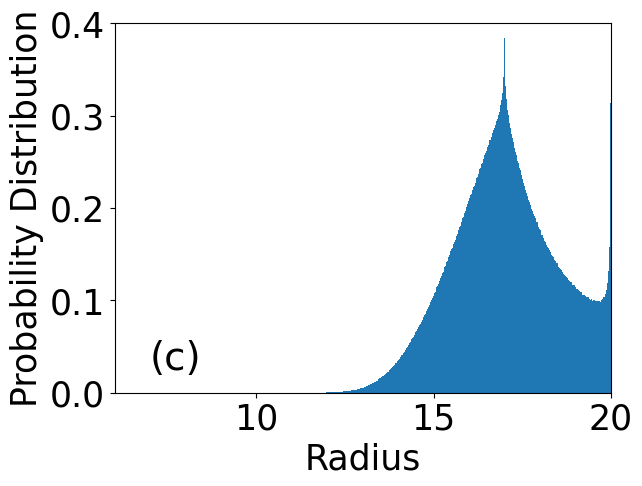}
    \includegraphics[width=0.32\linewidth]{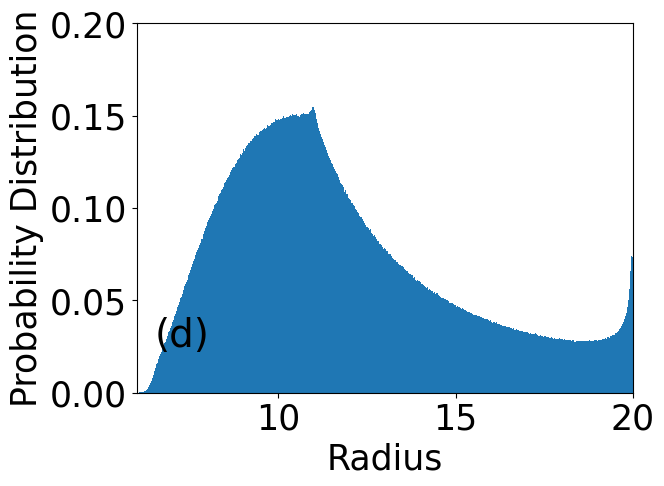}
    \includegraphics[width=0.32\linewidth]{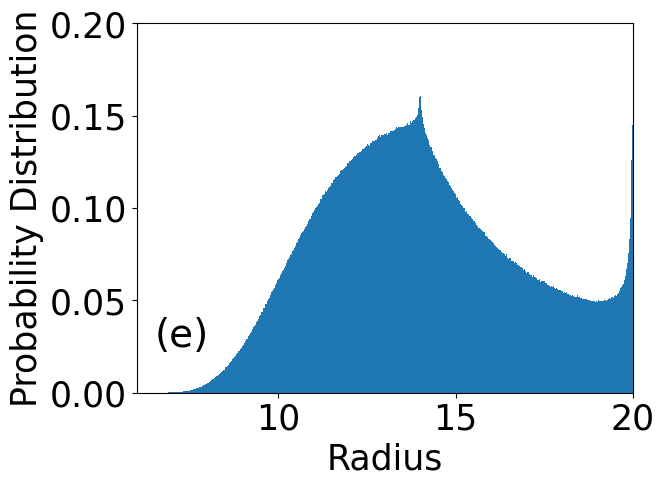}
    \includegraphics[width=0.32\linewidth]{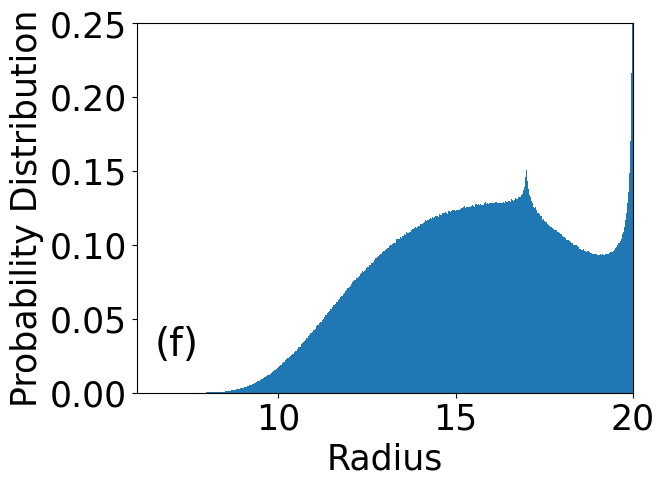}
    \caption{Radial probability distribution of reconstructed DNA. Averaged over 100,000 samples of the reconstructed curve of length 2000 $nm$. The parameters are $C_1 = 5$ and $C_3=0.3$. (a)(b)(c): Helical angle $\psi_0$ is obtained from the case $\alpha=0.001$, $\omega^2=0$. (d)(e)(f): Helical angle $\psi_0$ is obtained from the case $\alpha=0.1$, $\omega^2=0.3$, $\mu=0$. Initial position: (a)(d) $\vec{r}_0=(11,0,0)$; (b)(e) $\vec{r}_0=(14,0,0)$; (c)(f) $\vec{r}_0=(17,0,0)$.}
    \label{density_distribution}
\end{figure}
Figure \ref{density_distribution} shows the radial probability distribution of the reconstructed DNA, corresponding to initial condition $r_0 = 11, 14$ and $17$. In all cases, the distributions exhibit a peak at $r = 20$, corresponding to the capsid wall, as well as a secondary  peak near the initial position. These distributions   are consistent with the layered structure of DNA observed experimentally  in bacteriophages (e.g. \cite{Comolli2008,Lander2006}).

\subsection{Knotted DNA Configuration}
There are multiple layers of DNA filament in the capsid, for each layer, we reconstruct the DNA as described in Section \ref{reconstruction_one_layer}, with initial condition, $(r,\theta,z) = (r_i,0,0)$  for Eq. \eqref{Trajectory_Equations},
and $\beta = 0$, $\phi = \psi_0(r_i)$ for Eq. \eqref{LangevinNoise}. Here $r_i$ is the radius determined from the experimental data, $\psi_0$ is the solution obtained in Section \ref{numerical_solution_pde}.




To ensure consistency in  the DNA knotting probability calculations, we fix the total length of the DNA by truncating the height $H$ for all  layers.  The height of the i-th layer,  $ H_i=\max_j z_i^j - \min_j z_i^j$, measures the difference between the maximal and minimal $z$-coordinates of all points on that  layer.

The generation of a knotted molecule requires a single DNA strand that expands across multiple layers. However,  the proposed algorithm initially constructs separate layers. To address this, we introduce an algorithm to merge the individual layers into a single continuous trajectory.  Let $\vec{n}_{i}$ and $\vec{n}_{i+1}$ denote two consecutive layers.  We rotate the interior layer $\vec{n}_{i+1}$ until the angle $\theta$ of its first point aligns with  the angle of  the last point of the exterior layer. That is, for $i=1,\dots,N-1,$ we rotate $\vec{n}_{i+1}$ so that $\theta(\vec{n}_{i+1}^1)=\theta(\vec{n}_i^{M_i})$. Once the angles have been matched,  we connect the first strand of $\vec{n}_{i+1}$ to the last strand of $\vec{n}_i$ using linear interpolation. The full  helical trajectory is closed by projecting the open ends of the first and last layers  onto the top and bottom surfaces of a larger enclosing cylinder, and then connecting these projection points along a path on the cylinder's surface. This procedure prevents the introduction of  additional crossings when linking the two ends of the single DNA molecule. 

Once the trajectory is closed (circularized), we computed its HOMFLY-PT polynomial \cite{rawdon2023using} to identify the knot type of each trajectory. The HOMFLY-PT polynomial is a  two-variable Laurent polynomial that generalizes both the Alexander \cite{alexander1928topological} and Jones \cite{jones1997polynomial} polynomials, and provides greater accuracy  for distinguishing knot types with up to 16 crossings. Knot identification was carried out using the software Knotplot \cite{Scharein2024}.   

Figure  \ref{fig:3layers: randomized_dna} (a) shows the minimizer structure obtained by solving the corresponding Euler-Lagrange equations for three layers. Each layer is indicated with a different color with the golden layer being the closest to the capsid and the red layer being the closest to the disordered region. 
Figure \ref{fig:3layers: randomized_dna} (b)-(h) show some examples of the randomized DNA trajectories. The knot types are indicated below each figure.  The figures illustrate how the values of the parameters $C_1$ and $C_3$ of the Langevin simulation allow for the crossing of strands and therefore the generation of knots, without completely disrupting the structure of the minimizer.

\begin{figure}[!htbp]
    \centering
    \begin{subfigure}[b]{0.2\textwidth}
        \centering
        \includegraphics[width=\textwidth]{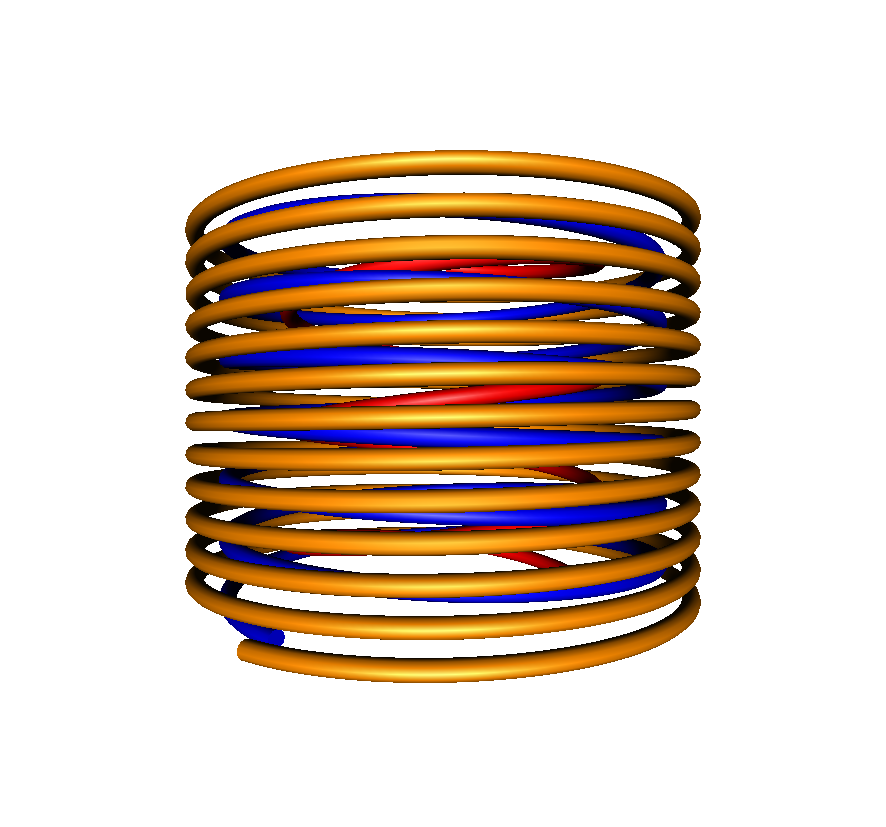}
        \caption{Minimizer}
    \end{subfigure}
    \hfill
    \begin{subfigure}[b]{0.2\textwidth}
        \centering
        \includegraphics[width=\textwidth]{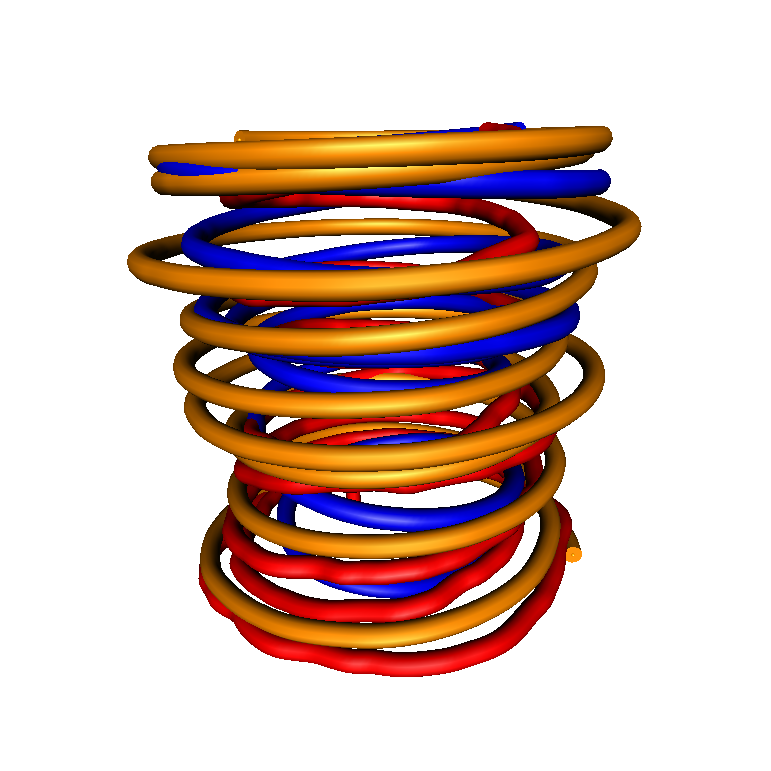}
        \caption{Unknot}
    \end{subfigure}
    \hfill
    \begin{subfigure}[b]{0.2\textwidth}
        \centering
        \includegraphics[width=\textwidth]{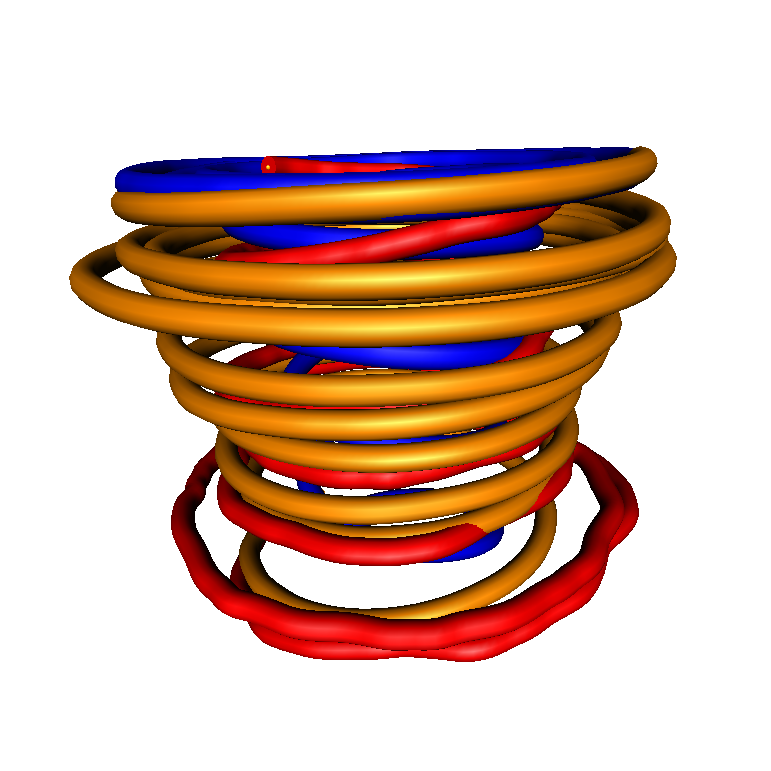}
        \caption{Trefoil}
    \end{subfigure}
    \hfill
    \begin{subfigure}[b]{0.2\textwidth}
        \centering
        \includegraphics[width=\textwidth]{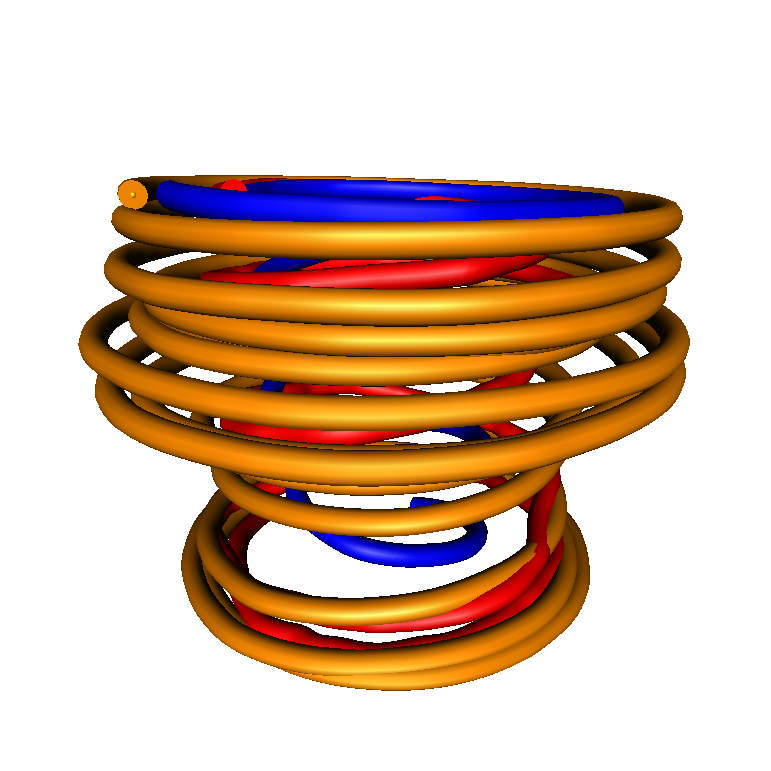}
        \caption{$4_1$}
    \end{subfigure}
    \begin{subfigure}[b]{0.2\textwidth}
        \centering
        \includegraphics[width=\textwidth]{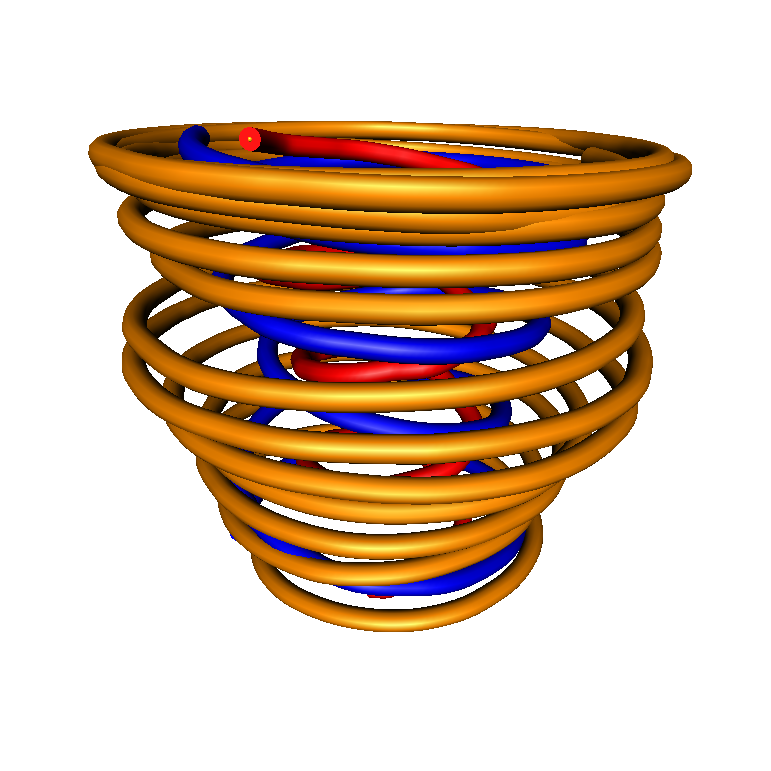}
        \caption{$5_1$}
    \end{subfigure}
    \hfill
    \begin{subfigure}[b]{0.2\textwidth}
        \centering
        \includegraphics[width=\textwidth]{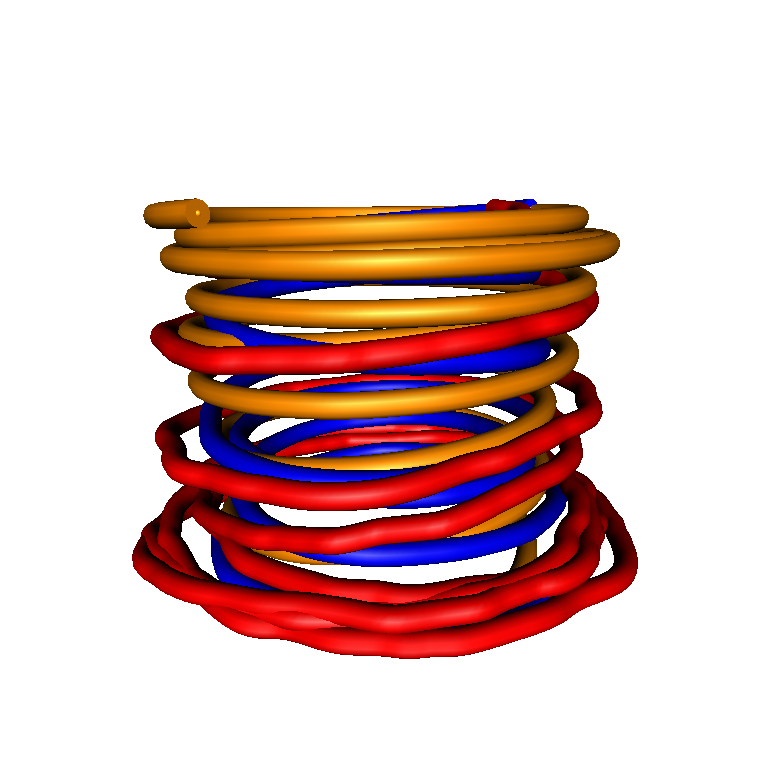}
        \caption{$5_2$}
    \end{subfigure}
    \hfill
    \begin{subfigure}[b]{0.2\textwidth}
        \centering
        \includegraphics[width=\textwidth]{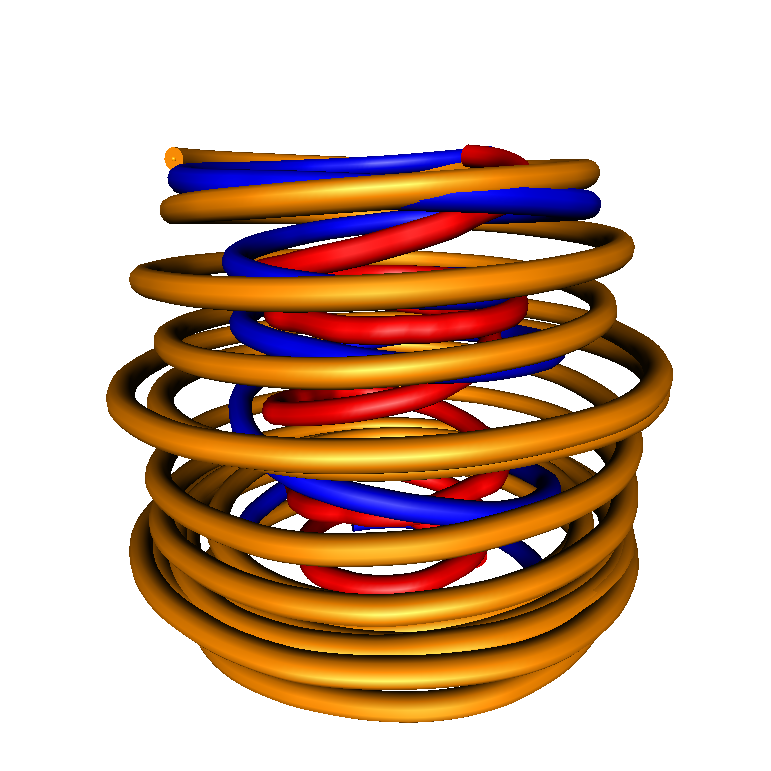}
        \caption{$7_1$}
    \end{subfigure}
    \hfill
    \begin{subfigure}[b]{0.2\textwidth}
        \centering
        \includegraphics[width=\textwidth]{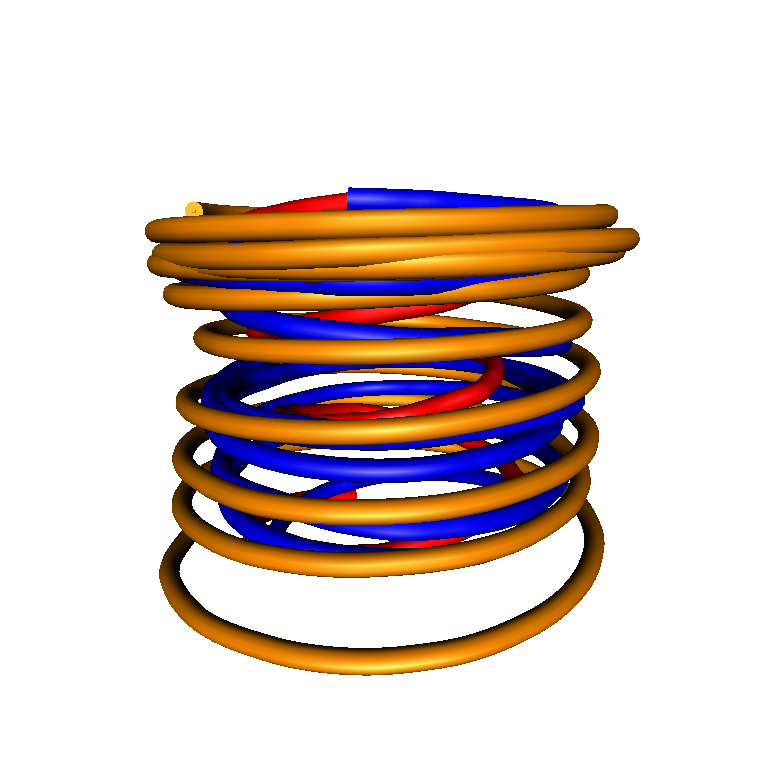}
        \caption{$3_1\# 3_1$}
    \end{subfigure}
    \caption{Examples of Reconstructed three-layer DNA trajectories. Figures generated using KnotPlot \cite{Scharein2024}.}
    \label{fig:3layers: randomized_dna}
\end{figure}

\begin{figure}[!htbp]
    \centering
    \includegraphics[width=0.7\linewidth]{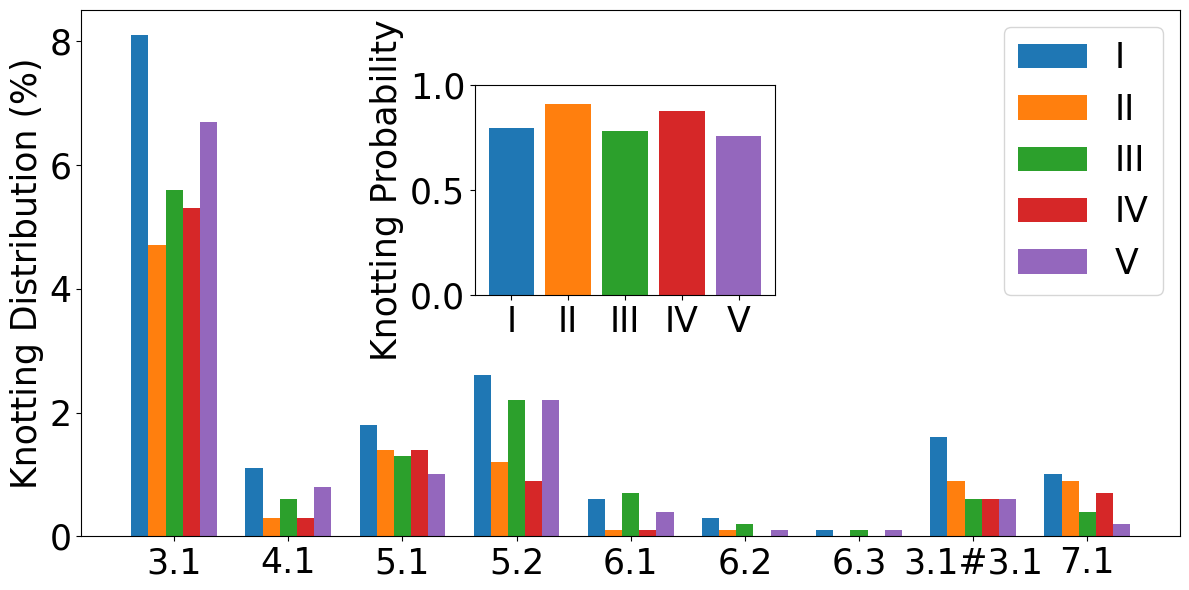}
    \caption{Knotting probability and distribution from 5000 samples of reconstructed DNA trajectories.
}
    \label{fig:placeholder}
\end{figure}


Next, we compute the knot probabilities and distributions. Results are shown in Figure \ref{fig:placeholder}. All simulations used the values estimated for P4 bacteriophage: $R_1=10\ nm$, $R_2=20\ nm$, with the first layer located at $r=17\ nm$ and total DNA length of $2200\ nm$. We consider the following 5 sets of model parameters from Figure \ref{solution_PDE}: 
\begin{align*}
    \text{Model I: } & \alpha = 0.001,\; \omega^2 = \mu = 0, 
    & \quad \text{Model II: } & \alpha = 0.1,\; \omega^2 = 0.3,\; \mu = 0, \\
    \text{Model III: } & \alpha = 0.1,\; \omega^2 = 0.3,\; \mu = -2, 
    & \quad \text{Model IV: } & \alpha = 2,\; \omega^2 = 7,\; \mu = 0, \\
    \text{Model V: } & \alpha = 2,\; \omega^2 = 6,\; \mu = -2
\end{align*}
The randomization parameters are $C_1=7$, $C_2=0.7$, $C_3=0.3$. All parameters gave knotting probabilities above $0.7$ consistent with experimental data \cite{arsuaga2002investigation}, with Models II and IV being the ones that best approximate the observed knotting probability in tailless mutants \cite{Arsuaga2005}.  The number of layers and the knot distributions also allow us to differentiate between the five models. Models III and V, both with $\mu=-2$ produced 6 layers, a value that is not consistent with other computational predictions \cite{Petrov2011}.  Models II and IV produced three layers and Model I produced four layers.   
The knot distributions show that the embeddings produced by any of the models are not random since random embeddings show a monotonic decrease of the knot frequency with an increasing knot crossing number. In particular, random embeddings show that the four crossing knot population is always larger than the populations of the five crossing knots \cite{Arsuaga2005,deguchi1997universality}. The proportion between the toroidal and twist five crossing knot is also very informative and helps us differentiate between the models. Experimental data show that the toroidal knot $5_1$ is more frequent than the twist knot $5_2$. In this case, models I, III and IV are not consistent with those observations but models II and IV are. The distribution of populations of knots with larger crossing number larger than five has not been fully characterized experimentally but experimental data clearly shows two populations for six and seven crossings knots. In both cases, one population is larger than the other. In our simulations, we observed that the sum of trefoils is the most prevalent, followed by $6_1$, with small amounts of $6_2$ and $6_3$.  From this analysis we conclude that the different embeddings generated by the parameters $\mu$, $\omega^2$ and $\alpha$ produce very distinct knotting distributions that can be used to further refine the analysis of experimental data. Performing a full scale analysis that relates the parameters of the model with the knotting distributions, however, is beyond the scope of this paper  and will be reported elsewhere.

\section{Discussion}
The properties of the DNA molecule in confined geometries  remain to be fully understood. The bacteriophage is an excellent model to study these properties because it packs its micron-long genome in a capsid that is a few tenths of nanometers long. These packing conditions have been shown to introduce topological changes in the DNA molecule in the form of knots \cite{arsuaga2002knotting,liu1981knotted,Liu1981,Wolfson1985} and to induce DNA chromonic liquid crystal phases. While random knotting models such as \cite{Arsuaga2005,arsuaga2002knotting, arsuaga2008,diao2014knot} have been used to study topological changes of DNA, and continuum models of liquid crystals using the Oseen-Frank theory, such as \cite{ortiz2003,hiltner2021chromonic,walker2020fine,liu2022helical} have been proposed to describe the free energy of DNA inside viral capsids, to our knowledge no mathematical model has incorporated the two.  In this work, we introduce a free energy formulation for DNA inside bacteriophages that takes into consideration the twist, bending and saddle-splay terms from the Oseen-Frank energy for liquid crystals modulated by a DNA density factor. Our proposed free energy also considers the role of entropy and the chemical potential of the DNA molecule. Assuming that the DNA molecule follows a helical trajectory away from north and south caps of the capsid (as evidenced by cryoEM observations \cite{Comolli2008,Lander2006}) we show that the free energy has a unique minimizer. We also show how the ratio between the Frank constants ($\alpha=\frac{k_3}{k_2}$, and the chemical potential ($\mu$) determine the packing organization of DNA. Since the minimizer obtained using the Oseen-Frank theory gives an unknotted configuration, we have implemented a Langevin simulation that generates knotted conformations that preserve the overall structure of the minimizer and that give knotting probability values similar to those observed experimentally. Interestingly, we observed that the different embeddings generated by the parameters of the model $\mu,\omega^2,\alpha$ generate different number of layers and knot distributions and therefore they can be used to analyze experimental data \cite{Arsuaga2005}. 
The proposed energy in Eq. \ref{S1} can be further explored. The role of the chemical potential $\mu$ is particularly interesting since the packed DNA length changes as as a function of $\mu$ without imposing any further conditions on the DNA density. Questions previously addressed in other works \cite{hiltner2021chromonic,walker2020fine}, such as the size ratio between ordered and disordered regions, can also be addressed by implementing the natural boundary conditions. In addition, the energy term associated with the Frank constant, $K_2 + K_4$, acts as a penalty that governs the energetic cost of the transition between these regions.


\bibliographystyle{plain}
\bibliography{gel, pei, chromonic}
\end{document}